\def\BibTeX{{\rm B\kern-.05em{\sc i\kern-.025em b}\kern-.08em
    T\kern-.1667em\lower.7ex\hbox{E}\kern-.125emX}}
    \newcommand{\norm}[1]{\left\lVert#1\right\rVert}
\newtheorem{claim}{Claim}
\newtheorem{claimproof}{Proof of Claim}
\newtheorem{theorem}{Theorem}
\newtheorem*{theorem*}{Theorem}
\newtheorem{lemma}[theorem]{Lemma}
\newtheorem{definition}[theorem]{Definition}
\newtheorem{remark}[theorem]{Remark}
\newcommand{\setx}{\ensuremath{\mathcal{X}}}
\newcommand{\sety}{\ensuremath{\mathcal{Y}}}
\newcommand{\setz}{\ensuremath{\mathcal{Z}}}
\newcommand{\setu}{\ensuremath{\mathcal{U}}}
\newcommand{\sete}{\ensuremath{\mathcal{E}}}
\newcommand{\mc}{\mathcal}
\newcommand{\mbb}{\mathbb}
\newcommand{\circlearrow}{}% just in case
\DeclareRobustCommand{\circlearrow}{%
  \mathrel{\vphantom{\rightarrow}\mathpalette\circle@arrow\relax}%
}
\newcommand{\circle@arrow}[2]{%
  \m@th
  \ooalign{%
    \hidewidth$#1\circ\mkern1mu$\hidewidth\cr
    $#1-$\cr}%
}
\DeclarePairedDelimiterX{\infdivx}[2]{(}{)}{%
  #1\;\delimsize\|\;#2%
}
\begin{document}\onecolumn
\title{Common Randomness Generation from Sources with Countable Alphabet} 

% %%% Single author, or several authors with same affiliation:
% \author{%
%   \IEEEauthorblockN{Stefan M.~Moser}
%   \IEEEauthorblockA{ETH Zürich\\
%                     ISI (D-ITET)\\
%                     CH-8092 Zürich, Switzerland\\
%                     Email: moser@isi.ee.ethz.ch}
% }

%%% Several authors with up to three affiliations:

\author{
\IEEEauthorblockN{Wafa Labidi \IEEEauthorrefmark{1}, Rami Ezzine \IEEEauthorrefmark{1}, Christian Deppe\IEEEauthorrefmark{2}\IEEEauthorrefmark{4}, Moritz Wiese\IEEEauthorrefmark{1} and Holger Boche\IEEEauthorrefmark{1}\IEEEauthorrefmark{3}\IEEEauthorrefmark{4}}
\IEEEauthorblockA{\IEEEauthorrefmark{1}Technical University of Munich, Chair of Theoretical Information Technology, Munich, Germany\\
\IEEEauthorrefmark{2}Technical University of Munich, Institute for Communications Engineering,  Munich, Germany\\
\IEEEauthorrefmark{3}CASA -- Cyber Security in the Age of Large-Scale Adversaries–
Exzellenzcluster, Ruhr-Universit\"at Bochum, Germany\\
\IEEEauthorrefmark{4}BMBF Research Hub 6G-life, Munich, Germany\\
Email: \{wafa.labidi, rami.ezzine, christian.deppe, boche\}@tum.de}
}

\maketitle

%%%%%%
%% Abstract: 
%% If your paper is eligible for the student paper award, please add
%% the comment "THIS PAPER IS ELIGIBLE FOR THE STUDENT PAPER
%% AWARD." as a first line in the abstract. 
%% For the final version of the accepted paper, please do not forget
%% to remove this comment!
%%

\begin{abstract}
We study a standard two-source model for common randomness (CR) generation in which Alice and Bob generate a common random variable with high probability of agreement by observing independent and identically distributed (i.i.d.) samples of correlated sources on countably infinite alphabet. The two parties are additionally allowed to communicate as little as possible over a noisy memoryless channel. In our work, we  give a single-letter formula for the CR capacity for the proposed model and provide a rigorous proof of it. This is a challenging scenario because some of the finite alphabet properties, namely of the entropy can not be extended to the countably infinite case. Notably, it is known that the Shannon entropy is in fact discontinuous at all probability distributions with countably infinite support.
\end{abstract}

\section{Introduction}
%\begin{itemize}
%\item define CR, applications (example: correlation-assisted ID) and state of the art.
 %   \item CR generation has been limited to sources with finite alphabet so far.
  %  \item We explore CR generation from sources with countably infinite alphabet. e.g., Poisson source model (useful in molecular communication??? optical communication systems??)
   % \item This is a challenging scenario because some of the finite alphabet properties, namely of the entropy can not be extended to the countably infinite case. Notably, it has been shown that the Shannon entropy is in fact discontinuous at all probability distributions with countably infinite support \cite{EntropyDisc,VariationalEntropy}.
    %\item Main contribution: single letter characterization of the CR capacity, provide a coding scheme based on unified typicality \cite{unifiedTyp,MarkovLemma}.
%\end{itemize}
In the common randomness (CR) generation scheme, the sender and the receiver, often described as terminals, aim to generate a common random variable with high probability of agreement.
The availability of this resource allows to  implement correlated random protocols that often outperform the deterministic ones or the ones using independent randomization in terms of higher speed and efficiency.
One can therefore achieve an enormous performance gain by taking advantage of this resource in the identification scheme, a new approach in communications  developed by Ahlswede and Dueck \cite{Idchannels} in 1989. Indeed, in contrast to the classical transmission scheme proposed by Shannon \cite{Shannon}, the resource CR allows a significant increase in the identification capacity of channels \cite{trafo,part2,ahlswede2021}. In the classical transmission scheme, the transmitter sends a message over the channel and the receiver is interested in decoding the received message. In the identification framework, the encoder sends an
identification message over the channel
and the decoder is interested in knowing whether a specific message of special interest to him has been
sent or not. Naturally, the sender has no knowledge of that message. Otherwise, the problem would be trivial.
The identification approach is much more efficient than the classical transmission scheme for many new applications that impose high requirements on reliability and latency. These applications include machine-to-machine
and human-to-machine systems \cite{application}, digital watermarking \cite{MOULINwatermarking,AhlswedeWatermarking,SteinbergWatermarking}, industry 4.0 \cite{industry4.0} and 6G communication systems \cite{6Gcomm},  Large 6G research projects \cite{researchgroup1}\cite{researchgroup2} are studying the problem of CR generation for future communication networks. 
This is because it is expected that CR will be a highly relevant resource for future communication systems \cite{6Gcomm}\cite{6Gpostshannon}, on the basis of which, the resilience requirements \cite{6Gcomm} and security requirements \cite{semanticsecurity} can be met. The aforementioned requirements are crucial for achieving trustworthiness. It is here worth mentioning that because of modern applications, trustworthiness represents a key challenge for future communication systems \cite{6Gandtrustworthiness}.
Further applications of CR generation include correlated random coding over arbitrarily varying channels (AVCs) \cite{capacityAVC} and oblivious transfer and bit commitment schemes \cite{commitmentcapacity}\cite{unconditionallysecure}
An other obvious application of CR generation is secret key generation, where the generated CR is used as secret keys in cryptography. These keys are used in secure message transmission and message authentication \cite{part1}\cite{maurer}. In this paper, however, no secrecy constraints are imposed.

\color{black}Over the past decades, the problem of CR generation from correlated discrete sources has been investigated in several researches. This problem was initially introduced by 
Ahlswede and Csizár in \cite{part2}. The authors in \cite{part2} considered in particular a two-source model for CR generation, in which the sender and the receiver communicate over a  rate-limited discrete noiseless channels and derived a single-letter formula for the CR capacity for that model.
Later, the authors in  \cite{globecom} extended the results on CR capacity to single-user single-input single-output (SISO) and  Multiple-Input Multiple-Output (MIMO) Gaussian channels, which are highly relevant in many communication situations including satellite and deep space communication links, wired and wireless communications, etc. In addition, the authors in \cite{globecom} proved that the CR capacity of
Gaussian channels is a lower-bound
on the corresponding correlation-assisted secure identification
capacity in the log-log scale in \cite{globecom}. This lower bound can
be greater than the secure identification capacity over Gaussian
channels with randomized encoding derived in \cite{wafapaper}. \color{black} 
Later, the authors in \cite{SISOfasingCR} and in \cite{MIMOfadingCR} focused on 
the problem of CR generation over fading channels, where the concept of outage from the CR generation perspective has been introduced.

Recently, the authors in \cite{isit_CR} studied the problem of CR generation from Gaussian sources and showed that the CR capacity is infinite when the Gaussian sources are perfectly correlated. In such a situation, no communication over the channel is required. 
The major motivation of the work in  \cite{isit_CR} was the drastic effects that the common randomness generated from the perfect feedback in the model treated in \cite{isit_paper} produce on the identification capacity.
%The work in \cite{isit_CR} was motivated by the drastic effects on the identification capacity produced by the common randomness generated from the perfect feedback in the model treated in \cite{isit_paper}. 
The identification capacity of Gaussian channels with noiseless feedback has been established in \cite{isit_paper} and it is infinite regardless of the scaling. The authors in \cite{isit_paper} proposed a coding strategy  that achieves an infinite identification capacity in which an infinitely large amount of CR between the sender and the receiver is generated using noiseless feedback.

%{\color{red}{CR generation from Gaussian sources ISIT paper}}

However, to the best of our knowledge, very few studies \cite{ResourceEfficientCR} have addressed the problem of CR generation from sources with countably infinite alphabet and as far as we know, no research has focused on deriving the CR capacity for such models.
An example for such source model is the Poisson source model, which is highly useful in molecular communication and optical communication systems. 
The transition to infinite alphabet could have drastic consequences in terms of Shannon entropy convergence, variational distance convergence, etc. Some of the finite alphabet properties, namely of the entropy can not be extended to the countably infinite case. Notably, it has been shown that the Shannon entropy is in fact discontinuous at all probability distributions with countably infinite support \cite{EntropyDisc,VariationalEntropy}.

In our work, we establish a single-letter formula for the CR capacity of a model involving a memoryless source on countably infinite alphabet with one-way communication over noisy memoryless channels. The CR capacity formula established in \cite{part2} for correlated discrete sources is extended to correlated sources on countably infinite alphabet. We use a generalized typicality criterion, called unified typicality \cite{unifiedTyp}, which can be applied to any sources on countable alphabet and make use of the conditional typicality lemma and conditional divergence lemma \cite{unifiedTyp,MarkovLemma} established for the proposed typicality criterion.
\quad \textit{ Paper Outline:} The paper is structured as follows.
In Section \ref{sec:preli}, we recall some auxiliary results related to unified typicality involved in our work. In Section \ref{systemmodelanddefinitions}, 
we present the two-source model for CR generation, provide the key definitions and a single-letter formula for the CR capacity. In Section \ref{sec: directproof}, we provide a rigorous achievability proof of the CR capacity. In Section \ref{converse}, we establish the converse proof of the main result. Section \ref{conclusion} contains concluding remarks and a discussion of some applications of our work.
\section{Preliminaries} \label{sec:preli}
In this section, we briefly present the notation that we adopt in this paper. We also recall some auxiliary results related to unified typicality involved in this work.
\subsection{Notations}
Calligraphic letters $\mathbb{R}, \sety,\setz, \ldots$ are used for finite or infinite sets; lowercase letters $x,y,z,\ldots$ stand for constants and values of random variables; uppercase letters $X,Y,Z,\ldots$ stand for random variables;
For any random variables $X$, $Y$ and $Z$, we use the notation $\color{black}X \circlearrow{Y} \circlearrow{Z}\color{black}$ to indicate a Markov chain.
$\mathbb{R}$ denotes the sets of real numbers; ${D}\infdivx{\cdot}{\cdot}$ denotes the Kullback-Leibler divergence;  
$\norm{\cdot}_2$ denotes the $\ell^2$ norm; $|\cdot|$ denotes the $\ell^1$ norm; 
$P_X$ denotes the probability mass function of a RV $X$ on a finite or countably infinite alphabet; $|\cdot|$ denotes the cardinality of a finite set; the set of probability distributions on the set $\setx$ is denoted by $\mathcal{P}(\setx)$; $H(\cdot)$, $\mathbb{E}(\cdot)$ and $I(\cdot ;\cdot)$ are the entropy, the expected value and the mutual information, respectively;
all logarithms and information quantities are taken to base $2$.
\subsection{Typicality Criteria for Countable Alphabet}
Strong typicality can only be applied to Random Variables (RV)s on finite alphabets \cite{wolfitz}. Thus, results based on strong typicality suffer the same limitation. A unified typicality for finite and countably infinite alphabets has been established in \cite{unifiedTyp}. This typicality concept can be applied to source/channel coding problems on countably infinite alphabet to prove results that cannot be proved by weak typicality. Unified typicality is based on a new information divergence measure introduced in \cite{unifiedTyp}. This typicality unifies both weak typicality \cite{Shannon} and strong typicality \cite{wolfitz}.
\begin{definition}
	Suppose $\nu>0$ and $X^n=(X_1,X_2,\ldots,X_n)$ was emitted by the memoryless source $P_X \in \mathcal{P}(\setx)$ with $\setx$ a countably infinite alphabet and $H(P_X)<\infty$. The unified typical set $\setu_{\nu}^n(P_X)$ w.r.t. $P_X$ is the set of sequences $x^n=(x_1,x_2,\ldots,x_n) \in \setx^n$ such that
	\begin{equation*}
	D\infdivx{Q_X}{P_X}+|H(Q_X)-H(P_X)| \leq \nu,
	\end{equation*},	where $Q_X$ is the empirical distribution of the sequence $x^n$.
	\begin{equation*}
	Q_X(x)=\frac{N(x|x^n)}{n}, \quad \forall x \in \setx,
	\end{equation*}
	where $N(x|x^n)$ is the number of occurrences of $x$ in the sequence $x^n$.
\end{definition}
\begin{remark}
In contrast to the case of finite alphabet, strong typicality does not imply weak typicality when the alphabet is countably infinite. It is known that the Shannon entropy is a continuous function of the probability distribution when the alphabet is finite. However, it is has been proved in \cite{EntropyDisc} that the Shannon entropy is discontinuous at all probability distribution on countably infinite support. By discontinuity of the Shannon entropy, there exist probability distributions defined on countably infinite alphabet that satisfy the strong typicality condition but not the weak typicality condition. For countably infinite alphabets, unified typicality is proved to be stronger than both strong and weak typicality \cite{unifiedTyp}.
\end{remark}
Authors in \cite{unifiedTyp} demonstrated the Asymptotic Equipartition Property (AEP) for unified typicality, which is similar to the AEP for weak and strong typicality. 
\begin{theorem}[\cite{unifiedTyp}]
	Let $H(P_X)$ be finite. For any $\nu>0$: \begin{enumerate}
		\item If $x^n \in \setu_{\nu}^n(P_X)$, then
		\begin{equation*}
		2^{-n(H(P_X)+\nu)} \leq	P_{X^n}(x^n) \leq  2^{-n(H(P_X-\nu))}.
		\end{equation*}
		\item For sufficiently large $n$,
		\begin{equation*}
		\Pr\{X^n \in \setu_\nu^n(P_X)\} > 1-\nu.
		\end{equation*} 
		\item  For sufficiently large $n$,
		\begin{equation*}
		(1-\nu) 2^{n(H(P_X)-\nu)} \leq	|\setu_\nu^n(P_X)| \leq 2^{n(H(P_X)+\nu)},
		\end{equation*}
		where $|\setu_\nu^n(P_X)|$ denotes the cardinality of the set $\setu_\nu^n(P_X)$.
	\end{enumerate}
\label{theorem:Typicality}
\end{theorem}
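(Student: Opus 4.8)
The plan is to prove the three items in sequence, obtaining (2) and (3) from (1) together with the classical weak asymptotic equipartition property (weak AEP) and a lower-semicontinuity argument that sidesteps the discontinuity of the Shannon entropy.

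For item (1) I would start from the identity valid for every $x^n$ whose empirical distribution $Q_X$ is absolutely continuous with respect to $P_X$:
\begin{equation*}
-\tfrac{1}{n}\log P_{X^n}(x^n) \;=\; -\sum_{x\in\setx} Q_X(x)\log P_X(x) \;=\; H(Q_X) + D\infdivx{Q_X}{P_X}.
\end{equation*}
If $x^n\in\setu_\nu^n(P_X)$, the defining inequality forces both $D\infdivx{Q_X}{P_X}\le\nu$ and $|H(Q_X)-H(P_X)|\le\nu$. Using $H(Q_X)\le H(P_X)+|H(Q_X)-H(P_X)|$ on one side, and $H(Q_X)+D\infdivx{Q_X}{P_X}\ge H(Q_X)\ge H(P_X)-|H(Q_X)-H(P_X)|$ on the other, we obtain $H(P_X)-\nu\le -\tfrac1n\log P_{X^n}(x^n)\le H(P_X)+\nu$, which is exactly the claimed two-sided bound on $P_{X^n}(x^n)$.

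For item (2), let $Q_X^{(n)}$ denote the (random) empirical distribution of $X^n$ and set $T_n\triangleq -\tfrac1n\log P_{X^n}(X^n)=H(Q_X^{(n)})+D\infdivx{Q_X^{(n)}}{P_X}$. Because $-\log P_X(X)\ge 0$ has finite mean $H(P_X)<\infty$, the law of large numbers gives $T_n\to H(P_X)$ almost surely; this is the weak AEP, and it remains valid over a countable alphabet since it is merely the law of large numbers for an i.i.d. sum with finite mean. Separately, the strong law applied coordinatewise — a countable union of null sets — gives $Q_X^{(n)}(x)\to P_X(x)$ for every $x$ almost surely, and lower semicontinuity of the entropy under pointwise convergence (Fatou's lemma applied to the nonnegative terms $-Q_X^{(n)}(x)\log Q_X^{(n)}(x)$) yields $\liminf_n H(Q_X^{(n)})\ge H(P_X)$ almost surely. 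Combined with $H(Q_X^{(n)})+D\infdivx{Q_X^{(n)}}{P_X}\to H(P_X)$ and $D\infdivx{Q_X^{(n)}}{P_X}\ge 0$, this squeezes $H(Q_X^{(n)})\to H(P_X)$, and hence $D\infdivx{Q_X^{(n)}}{P_X}\to 0$, both almost surely. Therefore $D\infdivx{Q_X^{(n)}}{P_X}+|H(Q_X^{(n)})-H(P_X)|\to 0$ in probability, so for $n$ large the probability that $X^n\in\setu_\nu^n(P_X)$ exceeds $1-\nu$.

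Item (3) is then routine bookkeeping: summing the lower bound of (1) over $\setu_\nu^n(P_X)$ and using $\Pr\{X^n\in\setu_\nu^n(P_X)\}\le 1$ gives $|\setu_\nu^n(P_X)|\le 2^{n(H(P_X)+\nu)}$ for every $n$, while summing the upper bound of (1) and invoking (2) gives $|\setu_\nu^n(P_X)|\ge(1-\nu)2^{n(H(P_X)-\nu)}$ for $n$ large. I expect the genuine obstacle to be the splitting step in item (2): in the finite-alphabet setting one simply appeals to continuity of $H$, whereas here one must exploit that entropy can only drop in the limit (lower semicontinuity) and let the weak AEP absorb the remaining slack into the divergence term. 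A minor point to check is that $Q_X^{(n)}\ll P_X$ almost surely — true because every observed symbol has positive probability under $P_X$ — so that the identity in (1) and the quantity $T_n$ are well defined and finite.
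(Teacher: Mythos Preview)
The paper does not prove this theorem; it is quoted verbatim from the reference on unified typicality and used as a black box. There is therefore no in-paper proof to compare against.

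That said, your argument is correct. Item (1) is the standard empirical-distribution identity $-\tfrac1n\log P_{X^n}(x^n)=H(Q_X)+D\infdivx{Q_X}{P_X}$ combined with the defining inequality, and your two-sided estimate is tight: when $H(Q_X)\ge H(P_X)$ the whole sum $|H(Q_X)-H(P_X)|+D\infdivx{Q_X}{P_X}$ controls the upper deviation, and when $H(Q_X)<H(P_X)$ the divergence term only helps. Item (3) is the usual counting, and you have it right in both directions.

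The nontrivial part is item (2), and your route---weak AEP for the sum $H(Q_X^{(n)})+D\infdivx{Q_X^{(n)}}{P_X}$, pointwise SLLN for $Q_X^{(n)}$, Fatou for lower semicontinuity of $H$, then squeezing to force both $H(Q_X^{(n)})\to H(P_X)$ and $D\infdivx{Q_X^{(n)}}{P_X}\to 0$---is clean and self-contained. This is essentially the mechanism in the original unified-typicality paper, which likewise hinges on the decomposition $-\tfrac1n\log P_{X^n}=H(Q)+D\infdivx{Q}{P}$ and the observation that entropy is lower semicontinuous while relative entropy is nonnegative; your phrasing via Fatou on the nonnegative summands $-Q(x)\log Q(x)$ makes the lower-semicontinuity step explicit rather than citing it. The absolute-continuity caveat you flag ($Q_X^{(n)}\ll P_X$ a.s.) is indeed necessary and indeed automatic.
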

Unified typicality w.r.t. a bivariate distribution has also been defined in \cite{unifiedTyp}.
\begin{definition}
	Suppose $\nu>0$ and the sequence $(X^n,Y^n)$ was emitted by the memoryless bivariate source $P_{XY} \in \mathcal{P}(\setx\times \sety) $ with $\setx$ and $\sety$ are countably infinite alphabets and {\color{black}{$H(P_{XY})<\infty$}}. The unified jointly typical set $\setu_\nu^{n}(P_{XY})$ w.r.t. $P_{XY}$ is the set of sequences $(x^n,y^n) \in \setx^n \times \sety^n$ such that
	\begin{align*}
	& D\infdivx{Q_{XY}}{P_{XY}} + \vert H(Q_{XY})-H(P_{XY})\vert\\
	& + \vert H( Q_{X})-H(P_{X})\vert + \vert H(Q_{Y})-H(P_{Y})\vert \leq \nu ,
	\end{align*} 
	where $Q_{XY}$ denotes the empirical distribution of the sequence $(x^n,y^n)$.
	\begin{equation*}
	Q_{XY}(x,y)=\frac{N(x,y|x^n,y^n)}{n}, \quad \forall (x,y) \in \setx\times \sety,
	\end{equation*}
	where $N(x,y|x^n,y^n)$ is the number of occurrences of $(x,y)$ in the sequence $(x^n,y^n)$.
\end{definition}
Using the concept of unified typicality, authors in \cite{unifiedTyp} extended the joint asymptotic equipartition property (JAEP) to countably infinite alphabets.
\begin{theorem}[\cite{unifiedTyp}]
Let $H(P_{XY})$ be finite. For any $\nu>0$: \begin{enumerate}
		\item If $(x^n,y^n) \in \setu_{\nu}^n(P_{XY})$, then
		\begin{equation*}
		2^{-n(H(P_{XY})+\nu)} \leq	P^n_{{XY}}(x^n,y^n) \leq  2^{-n(H(P_{XY}-\nu))}.
		\end{equation*}
		\item For sufficiently large $n$,
		\begin{equation*}
		\Pr\{(X^n,Y^n) \in \setu_\nu^n(P_{XY})\} > 1-\nu.
		\end{equation*} 
		\item  For sufficiently large $n$,
		\begin{equation*}
		(1-\nu) 2^{n(H(P_{XY})-\nu)} \leq	|\setu_\nu^n(P_{XY})| \leq 2^{n(H(P_{XY})+\nu)},
		\end{equation*}
		where $|\setu_\nu^n(P_{XY})|$ denotes the cardinality of the set $\setu_\nu^n(P_{XY})$.
	\end{enumerate}
\label{theorem:JointTyp}
\end{theorem}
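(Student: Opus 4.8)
\noindent\emph{Proof strategy.}\quad The plan is to obtain parts~1 and~3 from the single-variate unified AEP (Theorem~\ref{theorem:Typicality}) applied to the ``super-letter'' $Z=(X,Y)$, which takes values in the countable alphabet $\setx\times\sety$ and satisfies $H(P_{XY})<\infty$, and to handle the probabilistic statement in part~2 by a strong-law-of-large-numbers argument combined with lower semicontinuity of the entropy.

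For part~1, I would note that the three absolute-value terms involving the marginals in the definition of $\setu_\nu^n(P_{XY})$ are nonnegative, so $(x^n,y^n)\in\setu_\nu^n(P_{XY})$ already forces $D\infdivx{Q_{XY}}{P_{XY}}+\lvert H(Q_{XY})-H(P_{XY})\rvert\le\nu$; equivalently, the sequence of pairs $\bigl((x_1,y_1),\dots,(x_n,y_n)\bigr)$ lies in the single-variate unified typical set of the source $P_{XY}$ on $\setx\times\sety$. Part~1 of Theorem~\ref{theorem:Typicality} then gives the stated bounds on $P^n_{XY}(x^n,y^n)$ verbatim, since $-\tfrac1n\log P^n_{XY}(x^n,y^n)=D\infdivx{Q_{XY}}{P_{XY}}+H(Q_{XY})$ and the typicality inequality pins this quantity to within $\nu$ of $H(P_{XY})$.

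For part~2, I would show that each of the four nonnegative summands in the defining inequality converges to $0$ almost surely, whence so does their sum, so that for $n$ large it is $\le\nu$ with probability exceeding $1-\nu$. The key identity is $D\infdivx{Q_{XY}}{P_{XY}}+H(Q_{XY})=-\tfrac1n\sum_{i=1}^n\log P_{XY}(X_i,Y_i)$, whose right-hand side converges a.s.\ to $\mathbb{E}\bigl[-\log P_{XY}(X,Y)\bigr]=H(P_{XY})$ by the strong law of large numbers --- a step that is legitimate exactly because $-\log P_{XY}\ge0$ and $H(P_{XY})<\infty$. Since $Q_{XY}\to P_{XY}$ coordinatewise a.s.\ and the entropy is lower semicontinuous under pointwise convergence of probability mass functions (it is the supremum of the continuous finite partial sums $\sum_{(x,y)\in F}-Q_{XY}(x,y)\log Q_{XY}(x,y)$), one gets $\liminf_n H(Q_{XY})\ge H(P_{XY})$ a.s.; combined with $D\infdivx{Q_{XY}}{P_{XY}}\ge0$, this forces $H(Q_{XY})\to H(P_{XY})$ and therefore $D\infdivx{Q_{XY}}{P_{XY}}\to0$ a.s. The same argument applied to the $X$- and $Y$-marginals takes care of the two remaining terms, the hypotheses being met because $H(P_X),H(P_Y)\le H(P_{XY})<\infty$.

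For part~3, I would invoke the standard counting bounds: summing the part-1 lower bound $P^n_{XY}\ge2^{-n(H(P_{XY})+\nu)}$ over $\setu_\nu^n(P_{XY})$ and using that the total is at most $1$ yields $\lvert\setu_\nu^n(P_{XY})\rvert\le2^{n(H(P_{XY})+\nu)}$, while summing the part-1 upper bound and using part~2 (for $n$ large the total mass exceeds $1-\nu$) yields $\lvert\setu_\nu^n(P_{XY})\rvert\ge(1-\nu)2^{n(H(P_{XY})-\nu)}$. The step I expect to be the real obstacle is the entropy-convergence claim inside part~2: on a countably infinite alphabet, convergence of the empirical distribution (even in total variation) does \emph{not} imply convergence of its Shannon entropy --- this is precisely the discontinuity phenomenon recalled in the introduction --- so the finite-alphabet reasoning does not carry over, and it is the combination of lower semicontinuity with the finiteness of $H(P_{XY})$ that rescues the argument and explains why that hypothesis is indispensable.
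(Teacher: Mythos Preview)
The paper does not prove this theorem; it is quoted verbatim from \cite{unifiedTyp} as a known preliminary, so there is no in-paper argument to compare against. Your proposal is correct on its own terms: part~1 follows by discarding the nonnegative marginal-entropy summands and reducing to the univariate AEP via the identity $-\tfrac{1}{n}\log P_{XY}^n(x^n,y^n)=D\infdivx{Q_{XY}}{P_{XY}}+H(Q_{XY})$; part~3 is the standard counting argument from parts~1 and~2; and your handling of part~2 --- combining the SLLN for $-\tfrac{1}{n}\sum_i\log P_{XY}(X_i,Y_i)$ with lower semicontinuity of $H$ under pointwise convergence to force $H(Q_{XY})\to H(P_{XY})$ and hence $D\infdivx{Q_{XY}}{P_{XY}}\to 0$ --- is exactly the mechanism used in \cite{unifiedTyp}, and you correctly identify that this is where the hypothesis $H(P_{XY})<\infty$ (and hence $H(P_X),H(P_Y)<\infty$) is indispensable.
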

It has been proved in \cite{unifiedTyp} that unified typicality preserves the consistency property of strong typicality as below.
\begin{theorem}[\cite{unifiedTyp}]
Let $H(P_X)$ and $H(P_Y)$ be finite. For any $\nu>0$, if $(x^n,y^n) \in \setu_{\nu}^n(P_{XY}) $, then $x^n \in \setu_{\nu}^n(P_X)$ and $y^n \in \setu_{\nu}^n(P_Y)$. \label{theorem:consistency}
\end{theorem}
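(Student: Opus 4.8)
The plan is to reduce the claim to two elementary facts about Kullback--Leibler divergence: that the divergence between marginals never exceeds the divergence between the corresponding joint distributions, and that relative entropy and its conditional version are nonnegative. First I would record the necessary finiteness bookkeeping, which is the only place where the countably infinite alphabet needs attention. Since $x^n$, $y^n$ and $(x^n,y^n)$ contain at most $n$ distinct symbols, the empirical distributions $Q_X$, $Q_Y$ and $Q_{XY}$ have finite support, so $H(Q_X)$, $H(Q_Y)$ and $H(Q_{XY})$ are finite; by hypothesis $H(P_X)$ and $H(P_Y)$ are finite, hence $H(P_{XY}) \le H(P_X)+H(P_Y) < \infty$, so that $\setu_\nu^n(P_{XY})$, $\setu_\nu^n(P_X)$ and $\setu_\nu^n(P_Y)$ are all well defined. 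Finally, $(x^n,y^n) \in \setu_\nu^n(P_{XY})$ forces $D\infdivx{Q_{XY}}{P_{XY}} \le \nu < \infty$, so $P_{XY}$ assigns positive mass to every pair on which $Q_{XY}$ is supported, which legitimizes the manipulations below.

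Next I would prove the key inequality
\[
D\infdivx{Q_X}{P_X} \;\le\; D\infdivx{Q_{XY}}{P_{XY}}.
\]
Writing $Q_X(x)=\sum_{y}Q_{XY}(x,y)$ and $P_X(x)=\sum_{y}P_{XY}(x,y)$ and applying the log-sum inequality at each fixed $x$ gives
\[
Q_X(x)\log\frac{Q_X(x)}{P_X(x)} \;\le\; \sum_{y} Q_{XY}(x,y)\log\frac{Q_{XY}(x,y)}{P_{XY}(x,y)},
\]
and summing over the finitely many relevant $x$ yields the claim; equivalently, one may invoke the chain rule $D\infdivx{Q_{XY}}{P_{XY}} = D\infdivx{Q_X}{P_X} + \sum_x Q_X(x)\, D\infdivx{Q_{Y\mid X=x}}{P_{Y\mid X=x}}$ and discard the nonnegative conditional term. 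Finiteness of $D\infdivx{Q_{XY}}{P_{XY}}$ and the finite support of $Q_{XY}$ make both routes rigorous.

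To conclude, I would start from the defining inequality of $\setu_\nu^n(P_{XY})$ and drop the two nonnegative summands $\lvert H(Q_{XY})-H(P_{XY})\rvert$ and $\lvert H(Q_Y)-H(P_Y)\rvert$, obtaining
\[
D\infdivx{Q_{XY}}{P_{XY}} + \bigl\lvert H(Q_X)-H(P_X)\bigr\rvert \;\le\; \nu .
\]
Substituting the key inequality gives $D\infdivx{Q_X}{P_X} + \lvert H(Q_X)-H(P_X)\rvert \le \nu$, which is exactly the condition for $x^n \in \setu_\nu^n(P_X)$. Exchanging the roles of $X$ and $Y$ (and using $D\infdivx{Q_Y}{P_Y} \le D\infdivx{Q_{XY}}{P_{XY}}$) gives $y^n \in \setu_\nu^n(P_Y)$, completing the argument.

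I do not expect a genuine obstacle here: once the divergence-monotonicity inequality is in hand, the statement is a one-line consequence of discarding nonnegative terms. The only thing requiring care — and the only point at which the countably infinite alphabet plays a role — is checking the finiteness and absolute-continuity conditions that make the log-sum inequality and the chain rule applicable; since the empirical distributions have finite support and the joint divergence is bounded by $\nu$, these conditions hold automatically.
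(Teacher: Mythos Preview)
Your argument is correct. The monotonicity inequality $D\infdivx{Q_X}{P_X}\le D\infdivx{Q_{XY}}{P_{XY}}$ follows from the log-sum inequality (or equivalently from the chain rule plus nonnegativity of conditional divergence), and once you have it the conclusion is immediate by discarding the two superfluous nonnegative summands in the defining inequality of $\setu_\nu^n(P_{XY})$. Your finiteness bookkeeping is also in order: the empirical distributions have finite support, and $H(P_{XY})\le H(P_X)+H(P_Y)<\infty$ ensures the typical sets are well defined.

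As for comparison with the paper: there is nothing to compare against. The paper does not prove this statement; it merely quotes it as a result of \cite{unifiedTyp} (note the citation in the theorem header) and uses it as a black box in the proof of Lemma~\ref{lemma:joint}. Your proof is essentially the one given in that reference, and is in any case the natural argument: the joint unified-typicality functional was designed precisely so that each marginal unified-typicality functional sits inside it as a sub-sum, up to replacing the joint divergence by the smaller marginal divergence.
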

Unified joint typicality can be viewed as a special case of the usual unified typicality, where the sequence $(X,Y)$ is considered as a single RV $Z$. An interesting case is when the sequences $\tilde{X}^n$ and $\tilde{Y}^n$ are output by the statistically independent sources $P_X$ and $P_Y$, respectively. We prove the following Lemma based on Theorem \ref{theorem:Typicality} and Theorem \ref{theorem:JointTyp}.
\begin{lemma}
Let $0<\nu^\prime<\nu$. Suppose that the sequences $\tilde{X}^n$ and $\tilde{Y}^n$ are output by the {\emph statistically independent} sources $P_X$ and $P_Y$, respectively. For any $\nu>\nu^\prime>0$, the probability that $(\tilde{X}^n,\tilde{Y}^n) \in \setu_{\nu}^n(P_{XY})$ for some joint distribution $P_{XY}$ with marginals $P_X$ and $P_Y$ is bounded by
\begin{equation*}
(1-\nu) 2^{- n (I(X;Y)+2\nu^\prime+\nu)}\leq \Pr \bigg \{(\tilde{X}^n,\tilde{Y}^n) \in \setu_{\nu}^n(P_{XY}) \bigg \} \leq 2^{-n(I(X;Y)-2\nu^\prime-\nu)}.
\end{equation*}
\label{lemma:joint}
\end{lemma}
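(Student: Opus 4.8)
The plan is to evaluate $\Pr\{(\tilde X^n,\tilde Y^n)\in\setu_\nu^n(P_{XY})\}$ directly as a sum over the unified jointly typical set and to control both the individual summands and the number of terms by the three AEP-type statements that survive the passage to a countable alphabet, namely Theorems \ref{theorem:Typicality}, \ref{theorem:JointTyp} and \ref{theorem:consistency}. Since $\tilde X^n$ and $\tilde Y^n$ are produced by independent sources, the starting point is
\[
\Pr\{(\tilde X^n,\tilde Y^n)\in\setu_\nu^n(P_{XY})\}=\sum_{(x^n,y^n)\in\setu_\nu^n(P_{XY})}P_X^n(x^n)\,P_Y^n(y^n),
\]
and the two inequalities are then treated separately.

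For the upper estimate I would first apply the consistency property (Theorem \ref{theorem:consistency}): each $(x^n,y^n)\in\setu_\nu^n(P_{XY})$ satisfies $x^n\in\setu_\nu^n(P_X)$ and $y^n\in\setu_\nu^n(P_Y)$, so Theorem \ref{theorem:Typicality}(1) bounds each factor while Theorem \ref{theorem:JointTyp}(3) bounds the number of summands by $2^{n(H(P_{XY})+\nu)}$; multiplying out already gives the crude bound $2^{-n(I(X;Y)-3\nu)}$. To replace the exponent by $I(X;Y)-2\nu'-\nu$, I would intersect with the event that the marginal sequences are already $\nu'$-typical: on $\setu_{\nu'}^n(P_X)\times\setu_{\nu'}^n(P_Y)$ the sharper estimates $P_X^n(x^n)\le 2^{-n(H(P_X)-\nu')}$ and $P_Y^n(y^n)\le 2^{-n(H(P_Y)-\nu')}$ hold, and combining them with the level-$\nu$ cardinality bound for $\setu_\nu^n(P_{XY})$ yields exactly $2^{-n(I(X;Y)-2\nu'-\nu)}$; the contribution of the complementary sequences is kept under control by the high-probability statement in Theorem \ref{theorem:Typicality}(2).

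For the lower estimate I would exploit that the typical set shrinks with the accuracy parameter, so $\setu_{\nu'}^n(P_{XY})\subseteq\setu_\nu^n(P_{XY})$ because $\nu'<\nu$, and therefore it suffices to lower bound $\sum_{(x^n,y^n)\in\setu_{\nu'}^n(P_{XY})}P_X^n(x^n)P_Y^n(y^n)$. On this smaller set consistency applies at level $\nu'$, giving $P_X^n(x^n)\ge 2^{-n(H(P_X)+\nu')}$ and $P_Y^n(y^n)\ge 2^{-n(H(P_Y)+\nu')}$ by Theorem \ref{theorem:Typicality}(1), while Theorem \ref{theorem:JointTyp}(3) gives $|\setu_{\nu'}^n(P_{XY})|\ge(1-\nu')2^{n(H(P_{XY})-\nu')}$ for $n$ large. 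Multiplying and inserting $I(X;Y)=H(P_X)+H(P_Y)-H(P_{XY})$, together with the elementary inequalities $3\nu'<2\nu'+\nu$ and $1-\nu'>1-\nu$, produces the claimed lower bound $(1-\nu)2^{-n(I(X;Y)+2\nu'+\nu)}$.

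What makes this more than a transcription of the finite-alphabet computation is that on a countably infinite alphabet there is no method of types: one cannot enumerate joint types or estimate their number by elementary counting, so every cardinality and every probability estimate has to be routed through the unified-typicality AEPs of \cite{unifiedTyp}, which is precisely what Theorems \ref{theorem:Typicality}--\ref{theorem:consistency} supply. The only genuine bookkeeping difficulty is that the consistency property transfers joint typicality to the marginals at the \emph{same} accuracy parameter; this is what forces the two-parameter formulation $\nu'<\nu$ and makes the split of $\setu_\nu^n(P_{XY})$ into a $\nu'$-typical main part and a negligible remainder the natural way to organize the upper bound — after that the manipulation of exponents is routine.
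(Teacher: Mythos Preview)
Your overall plan—write the probability as a sum over $\setu_\nu^n(P_{XY})$, bound the individual summands via consistency plus the marginal AEP, and bound the number of terms via the joint AEP—is exactly the route the paper takes. Your treatment of the lower bound (pass to the smaller set $\setu_{\nu'}^n(P_{XY})\subseteq\setu_\nu^n(P_{XY})$ so that consistency delivers $\nu'$-level marginal typicality, then relax $3\nu'$ to $2\nu'+\nu$ and $1-\nu'$ to $1-\nu$) is correct and in fact more careful than the paper's, which applies the $\nu'$-level marginal bounds directly on $\setu_\nu^n(P_{XY})$ without saying why the parameter drops from $\nu$ to $\nu'$.

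The sharpening of the upper bound, however, has a real gap. After splitting $\setu_\nu^n(P_{XY})$ according to whether $(x^n,y^n)\in\setu_{\nu'}^n(P_X)\times\setu_{\nu'}^n(P_Y)$, you bound the ``main'' piece by $2^{-n(I(X;Y)-2\nu'-\nu)}$ and invoke Theorem~\ref{theorem:Typicality}(2) for the remainder. But Theorem~\ref{theorem:Typicality}(2) only yields $\Pr\{\tilde X^n\notin\setu_{\nu'}^n(P_X)\}\le\nu'$ (and likewise for $\tilde Y^n$), which is a fixed constant; the target upper bound is exponentially small in $n$, so for large $n$ the remainder dominates and the splitting buys nothing. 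What your ``crude'' step genuinely establishes is $\Pr\{\cdot\}\le 2^{-n(I(X;Y)-3\nu)}$, and since $3\nu>2\nu'+\nu$ this is strictly weaker than the stated upper bound. The paper's proof of the upper bound is the single word ``Similarly'', which amounts to exactly that crude argument; the exponent $2\nu'+\nu$ is not actually derived there either, and in the only place the lemma is used (the analysis of $\sete_3$) it is the $3\nu$ form that is applied.
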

\begin{proof}
\begin{align*}
\Pr \bigg \{(\tilde{X}^n,\tilde{Y}^n) \in \setu_{\nu}^n(P_{XY}) \bigg\} &= \sum_{(\tilde{x}^n,\tilde{y}^n) \in \setu_\nu^n(P_{XY})} P_X^n(\tilde{x}^n) P_Y^n(\tilde{y}^n) \\
& \overset{(a)}{\geq}  |\setu_\nu^n(P_{XY})| 2^{-n (H(P_X)+\nu^\prime)} 2^{-n (H(P_y)+\nu^\prime)} \\
& \overset{(b)}{\geq} (1-\nu) 2^{n(H(P_{XY}-\nu))}  2^{-n (H(P_X)+\nu^\prime)} 2^{-n (H(P_y)+\nu^\prime)} \\
& = (1-\nu) 2^{-n (I(X;Y)+2\nu^\prime+\nu)},
\end{align*}	
where $(a)$ follows from Theorem \ref{theorem:consistency} and Theorem \ref{theorem:Typicality} and $(b)$ follows from Theorem \ref{theorem:JointTyp}.
Similarly, we have
\begin{align*}
\Pr \bigg \{(\tilde{X}^n,\tilde{Y}^n) \in \setu_{\nu}^n(P_{XY}) \bigg\} & \leq 2^{-n(I(X;Y)-2\nu^\prime-\nu)}.
\end{align*} 
\end{proof}
A generalization to a multivariate distribution can be easily shown \cite{unifiedTyp}. In the following, we consider a trivariate distribution.
\begin{definition}
	Suppose $\nu>0$ and the sequence $(X^n,Y^n,Z^n)$ was emitted by the memoryless multivariate source $P_{XYZ} \in \mathcal{P}(\setx\times \sety \times \setz) $ with $\setx$, $\sety$ and $\setz$ countably infinite alphabets and {\color{black}{$H(P_{XYZ})<\infty$}}. The unified jointly typical set $\setu_\nu^{n}(P_{XYZ})$ w.r.t. $P_{XYZ}$ is the set of sequences $(x^n,y^n,z^n) \in \setx^n \times \sety^n \times \setz^n$ such that
	\begin{align*}
	& D\infdivx{Q_{XYZ}}{P_{XYZ}} + \vert H(Q_{XYZ})-H(P_{XYZ})\vert \\
	& +
	\vert H(Q_{XY})-H(P_{XY})\vert+  \vert H(Q_{XZ})-H(P_{XZ})\vert \\
	&+  \vert H(Q_{YZ})-H(P_{YZ})\vert+ \vert H( Q_{X})-H(P_{X})\vert\\
	& + \vert H(Q_{Y})-H(P_{Y})\vert +\vert H(Q_{Z})-H(P_{Z})\vert \leq \nu ,
	\end{align*} 
	where $Q_{XYZ}$ denotes the empirical distribution of the sequence $(x^n,y^n,z^n)$.
\end{definition}
Based on the unified typicality criterion, authors in \cite{MarkovLemma} introduced the following  Markov lemma for countable alphabets.
\begin{theorem}[\cite{MarkovLemma}]
	Let $P_{UXY} \in \mathcal{P}(\setu \times \setx \times \sety)$ be a memoryless multivariate source with $\setu$, $\setx$ and $\sety$ are countable alphabets and $H(P_{UXY}) < \infty$. We assume that $U\circlearrow X\circlearrow Y$ is a Markov chain and 
	\begin{equation}
	\sum_{u}P_{U|X}(u\vert x)(\log P_{U|X}(u\vert x))^{2}< c, \label{conditionVar}
	\end{equation} 
	where the constant $c$ is positive and finite. If for any $\nu>0$ and any given $(x^n,y^n) \in \setu_\nu^n(P_{XY})$, $U^n$ is drawn from $\prod_{i=1}^n P_{U_i|X_i}$, then
	\begin{equation*}
	\Pr \left \{(U^n,x^n,y^n) \in \setu_\nu^n(P_{UXY}) \right\} \geq 1-\nu,
	\end{equation*}
	for sufficiently large $n$ and sufficiently small $\nu$
	\label{theorem:Markov}
\end{theorem}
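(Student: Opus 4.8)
The plan is to bound, with high probability over the draw $U^n\sim\prod_{i=1}^n P_{U_i|X_i}$, each of the eight nonnegative terms in the definition of $\setu_\nu^n(P_{UXY})$, splitting them into terms that depend only on $(x^n,y^n)$ and terms that genuinely involve $U$. The first group — $D(Q_{XY}\|P_{XY})$ (a summand of $D(Q_{UXY}\|P_{UXY})$ after the chain rule), $|H(Q_{XY})-H(P_{XY})|$, $|H(Q_X)-H(P_X)|$ and $|H(Q_Y)-H(P_Y)|$ — is already under control, since $(x^n,y^n)\in\setu_\nu^n(P_{XY})$ by hypothesis and, by Theorem~\ref{theorem:consistency}, also $x^n\in\setu_\nu^n(P_X)$ and $y^n\in\setu_\nu^n(P_Y)$. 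So the real work is to show that the remaining conditional terms — those built from $H(Q_U)$, $H(Q_{UX})$, $H(Q_{UY})$, $H(Q_{UXY})$ and the extra part of $D(Q_{UXY}\|P_{UXY})$ — become arbitrarily small in probability. (To make the final accounting come out to exactly $\nu$ it is natural to run the argument with a slightly smaller conditioning parameter $\nu'<\nu$ and absorb the $U$-fluctuations into the slack $\nu-\nu'$.)

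First I would exploit the Markov structure. From $U\circlearrow X\circlearrow Y$ we get $P_{UXY}=P_{XY}\,P_{U|X}$, so the chain rules for entropy and divergence write each $U$-term as an already-controlled $(x^n,y^n)$-part plus a conditional part: for instance $H(Q_{UXY})=H(Q_{XY})+H(Q_{U|XY}\mid Q_{XY})$, $H(Q_{UX})=H(Q_X)+H(Q_{U|X}\mid Q_X)$, and
\[
D(Q_{UXY}\|P_{UXY})=D(Q_{XY}\|P_{XY})+\sum_{x,y}Q_{XY}(x,y)\,D\big(Q_{U|XY}(\cdot|x,y)\,\|\,P_{U|X}(\cdot|x)\big).
\]
Thus it suffices to show that this conditional divergence and the conditional-entropy gaps such as $|H(Q_{U|XY}\mid Q_{XY})-H(P_{U|X}\mid P_{XY})|$ (and the analogues with $Q_X,Q_Y$ in place of $Q_{XY}$) all tend to $0$ in probability.

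For that step, I would note that conditioned on $x^n$ the coordinates of $U^n$ are independent, and the coordinates at positions $i$ with $X_i=x$ form $nQ_X(x)$ i.i.d.\ draws of $P_{U|X}(\cdot|x)$; hence $Q_{U|X}(\cdot|x)$ concentrates on $P_{U|X}(\cdot|x)$, and $-\tfrac1n\sum_{i=1}^n\log P_{U|X}(U_i|x_i)$ concentrates on $H(P_{U|X}\mid Q_X)$, which is itself close to $H(P_{U|X}\mid P_X)$ because $D(Q_X\|P_X)\le\nu$ forces $|Q_X-P_X|$ small (Pinsker). This is precisely the conditional typicality / conditional divergence lemma for unified typicality from \cite{unifiedTyp,MarkovLemma}: for any prescribed $\varepsilon>0$ and $n$ large, all the conditional terms above are below $\varepsilon$ with probability $\ge1-\varepsilon$. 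Assembling: the $(x^n,y^n)$-part is at most $\nu'$ by hypothesis while the conditional part is at most $\nu-\nu'$ with high probability, so the eight-term sum is $\le\nu$ with probability $\ge1-\nu$ for all sufficiently large $n$ and sufficiently small $\nu$.

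The hard part will be this concentration step over the countably infinite alphabet. One cannot union-bound the within-class fluctuations over the infinitely many symbols, and because the Shannon entropy is discontinuous at distributions of infinite support a small perturbation of $Q_{U|X}$ need not perturb its entropy by little, so the estimate has to be controlled globally rather than symbol-by-symbol. This is exactly where the variance hypothesis \eqref{conditionVar} is used: it bounds the second moment of the entropy density $-\log P_{U|X}(U_i|x_i)$ uniformly in $x_i$, so Chebyshev delivers the required weak law, and by Cauchy--Schwarz it also gives the uniform bound $H(P_{U|X}(\cdot|x))\le\sqrt{c}$, which is what keeps the tail contributions of rare symbols under control in both the divergence and the entropy estimates. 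Working with unified typicality rather than with weak or strong typicality is what makes those entropy gaps directly accessible in the first place.
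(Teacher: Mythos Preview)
The paper does not actually prove this statement. Theorem~\ref{theorem:Markov} is quoted verbatim from \cite{MarkovLemma} (note the citation in the theorem header) and is used as a black box in the achievability proof of Section~\ref{sec: directproof}; no argument for it appears anywhere in the paper. So there is no ``paper's own proof'' to compare your proposal against.

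As for the proposal itself: the outline is the natural one and matches what is done in \cite{MarkovLemma}. You correctly isolate the $(x^n,y^n)$-only terms via Theorem~\ref{theorem:consistency}, use the chain rules together with $U\circlearrow X\circlearrow Y$ to reduce the remaining terms to conditional divergence and conditional-entropy fluctuations, and then invoke the second-moment hypothesis \eqref{conditionVar} to get concentration of $-\tfrac1n\sum_i\log P_{U|X}(U_i|x_i)$ via Chebyshev. You are also right that this is where the countable-alphabet difficulty lives and why the uniform variance bound is essential. Two small caveats worth flagging if you actually carry this out: first, the $\nu'/\nu$ bookkeeping has to be done carefully because there are several conditional terms (e.g.\ $|H(Q_{UY})-H(P_{UY})|$ and $|H(Q_U)-H(P_U)|$) each of which needs its own slack; second, controlling $|H(Q_{U|XY}|Q_{XY})-H(P_{U|X}|P_{XY})|$ requires more than Pinsker on $Q_X$ alone, since one must also transfer the bound from conditioning on $(x,y)$ back to conditioning on $x$---this is again where the Markov chain and the uniform-in-$x$ bound \eqref{conditionVar} are used. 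These are exactly the points worked out in \cite{MarkovLemma}.
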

\section{System Model, Definitions and Main Result}
\label{systemmodelanddefinitions}
In this section, we introduce our system model and extend the definition of an achievable CR rate to the introduced system model presented in Fig. \ref{fig:System}. We then present the main result of this paper.
\subsection{System Model and Definitions}
\label{systemmodel}
Let a memoryless source $P_{XY}$ with two components and variables $X$ and $Y$ on the {\emph{countable alphabets} (finite and countably infinite) $\setx$ and $\sety$, respectively, be given. For instance, a Poisson source is defined on a countably infinite alphabet. The marginal distributions $P_X$ and $P_Y$ satisfy
\begin{equation}
\mathbb{E}\left[\log^2P_X(X)\right], \ \mathbb{E}\left[\log^2P_Y(Y)\right] < \infty. \label{eq: ConstraintOnXY}
\end{equation}
Only Terminal $A$ has access to the source output $X^n$ and only Terminal $B$ has access to the source output $Y^n$. Both outputs have the same block length $n.$ We assume that the joint distribution $P_{XY}$ is known to both terminals. A one-way communication from Terminal $A$
to Terminal $B$ occurs over a memoryless channel $W$. Let $C(W)$ denote the Shannon capacity of the channel $W$. No other resources are available to any of the terminals. This is the standard two-source model introduced by Ahlswede and Csiszár in \cite{part2}, where they considered the communication over a discrete memoryless noiseless channel with limited capacity.
A CR-generation protocol of block length $n$ as introduced in \cite{part2} is composed of:
\begin{enumerate}
    \item a function $\Phi$ that assigns to the random sequence $X^n$ a RV $K$ generated by the terminal $A$ and defined on the alphabet $\mathcal{K}$ with $|\mathcal{K}|\geq 3$,
    \item a function $\Lambda$ that converts $X^n$ into the input sequence $T^n$,
    \item a function $\Psi$ that assigns to $Y^n$ and the output sequence $Z^n$ a RV $L$ generated by Terminal $B$ and defined on the alphabet $\mathcal{K}$ .
\end{enumerate}
This protocol outputs a pair of RVs $(K,L)$. This pair $(K,L)$ is called permissible \cite{part2} if the following is satisfied.
\begin{equation}
    K=\Phi(X^{n}), \ \     L=\Psi(Y^{n},Z^{n}).
    \label{KLSISOcorrelated}
\end{equation}
The system model is depicted in Fig. \ref{fig:System}.
\begin{figure}[htb!]
\centering
\tikzstyle{block} = [draw, rectangle, rounded corners,
minimum height=2em, minimum width=2cm]
\tikzstyle{blockchannel} = [draw, top color=white, bottom color=white!80!gray, rectangle, rounded corners,
minimum height=1cm, minimum width=.3cm]
\tikzstyle{input} = [coordinate]
\usetikzlibrary{arrows}
\scalebox{.9}{
\begin{tikzpicture}[scale= 1,font=\footnotesize]
\node[blockchannel] (source) {$P_{XY}$};
\node[blockchannel, below=3cm of source](channel) {noisy memoryless channel $W$};
\node[block, below left=3.2cm of source] (x) {Terminal $A$};
\node[block, below right=3.2cm of source] (y) {Terminal $B$};
%\node[block,right= .7cm of channel.390] (bob) {\small Decoder};
%\node[block,right=.7cm of channel.330] (eve) {\small Eavesdropper};
\node[above=1cm of x] (k) {$K=\Phi(X^n)$};
\node[above=1cm of y] (l) {$L=\Psi(Y^n,Z^n)$};

\draw[->,thick] (source) -- node[above] {$X^n$} (x);
\draw[->, thick] (source) -- node[above] {$Y^n$} (y);
\draw [->, thick] (x) |- node[below right] {$T^n=\Lambda(X^n)$} (channel);
\draw[<-, thick] (y) |- node[below left] {$Z^n$} (channel);
\draw[->] (x) -- (k);
\draw[->] (y) -- (l);

\end{tikzpicture}}
\caption{Bivariate memoryless countable-alphabet source model with one-way communication over a noisy memoryless channel}
\label{fig:System}
\end{figure}
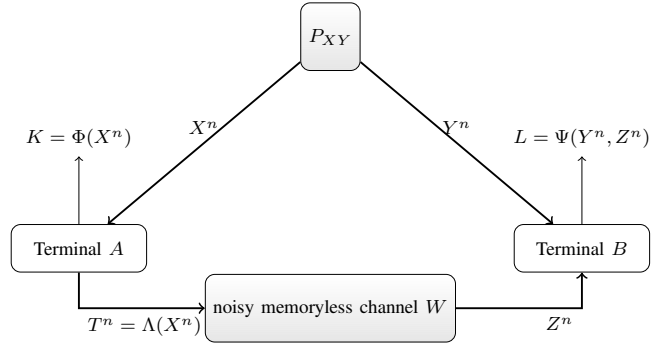
Now, we provide the definition of an achievable CR rate and of the CR capacity w.r.t. our system model depicted in Fig. \ref{fig:System}. We extend the definition of an achievable CR rate and CR capacity introduced in \cite{part2} to our system model.
\begin{definition}  We call a number $H$ an achievable CR rate for the system model in Fig. \ref{fig:System} if there exists a non-negative constant $c$ such that for every $\epsilon>0$ and $\gamma>0$ and for sufficiently large $n$ there exists a permissible  pair of RVs $(K,L)$ such that
\begin{align}
    \Pr\{K\neq L\}\leq \epsilon, 
    \label{errorcorrelated}
\end{align}
\begin{align}
    |\mathcal{K}|\leq 2^{cn},
    \label{cardinalitycorrelated}
\end{align}
\begin{align}
    \frac{1}{n}H(K)> H-\gamma.
     \label{ratecorrelated}
\end{align}
\end{definition}
Now, we extend the definition of the CR capacity introduced in \cite{part2} to our system model depicted in Fig. \ref{fig:System}.
\begin{definition} 
The CR capacity $C_{CR}(p_{XY},W)$ for the system model in Fig. \ref{fig:System} is the maximum achievable CR rate.
\end{definition}
Now, we present the main result of our work. We characterize the CR capacity of the system model in Fig. \ref{fig:System}.
\begin{theorem}
For the system model depicted in Fig. \ref{fig:System}, the CR capacity $C_{CR}(P_{XY},W)$ is given by
\begin{align}
	C_{CR}(P_{XY},W)&=
	\underset{ \substack{U \in \mathcal{U} \\{\substack{U \circlearrow{X} \circlearrow{Y}\\ I(U;X)-I(U;Y) \leq \color{black}C(W)\color{black}}}}}{\sup} I(U;X), \nonumber
	\end{align}
	where ${\color{black}{C(W)}}$ is the Shannon capacity of $W$ and the set $\mathcal{U}$ is defined as follows
	 \begin{equation}
	 \mathcal{U}=\Big\{U: \quad \mathbb{E}\left[ \log^2(P_{U|X}(U|X=x))|X=x \right]<\infty,\ \forall x \in \setx \Big\}. \label{eq:SetU}
	 \end{equation}
	 \label{mainTheorem}
\end{theorem}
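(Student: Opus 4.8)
The plan is to establish the two inequalities $C_{CR}(P_{XY},W)\ge R^\star$ (achievability) and $C_{CR}(P_{XY},W)\le R^\star$ (converse) separately, where $R^\star$ denotes the claimed supremum, adapting the Ahlswede--Csisz\'ar two-source scheme to the countable-alphabet setting by replacing strong typicality everywhere with the unified typicality of Section~\ref{sec:preli}. The point of the moment hypotheses \eqref{eq: ConstraintOnXY} and \eqref{eq:SetU} is exactly to keep the relevant entropies finite and to license the Markov lemma; without them the standard manipulations break because Shannon entropy is discontinuous on infinitely supported distributions.

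\emph{Achievability.} This amounts to a Wyner--Ziv code for $X^n$ with decoder side information $Y^n$, except that the description link is the noisy channel $W$ rather than a noiseless rate-$R$ pipe. Fix $U\in\mathcal{U}$ with $H(U)<\infty$, $U\circlearrow X\circlearrow Y$ and $I(U;X)-I(U;Y)<C(W)$ strictly, and pick $\nu>\nu'>0$ small. Generate a random codebook $\{U^n(k)\}_{k\in\mathcal{K}}$ of $|\mathcal{K}|=\lceil 2^{n(I(U;X)+2\nu)}\rceil$ i.i.d.\ $P_U^n$ sequences and partition $\mathcal{K}$ into $\lceil 2^{n(I(U;X)-I(U;Y)+3\nu)}\rceil$ bins. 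Terminal~$A$ sets $K$ to the first $k$ with $(U^n(k),X^n)\in\setu_\nu^n(P_{UX})$; such a $k$ exists with high probability by a covering argument --- the reverse estimate of Lemma~\ref{lemma:joint} (plus the conditional typicality lemma of \cite{unifiedTyp}) gives that a random codeword is jointly unified-typical with a fixed typical $X^n$ with probability $\gtrsim 2^{-n(I(U;X)+\nu+2\nu')}$, and the codebook is exponentially larger. Terminal~$A$ transmits the bin index $m(K)$ over $W$ with a length-$n$ capacity-achieving code; as the rate $I(U;X)-I(U;Y)+3\nu$ is below $C(W)$ for small $\nu$, Terminal~$B$ recovers $m(K)$ and sets $L$ to the unique $k'$ in that bin with $(U^n(k'),Y^n)\in\setu_\nu^n(P_{UY})$. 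That the true $K$ qualifies follows from the Markov lemma (Theorem~\ref{theorem:Markov}): $U\in\mathcal{U}$ is precisely the variance hypothesis \eqref{conditionVar}, so $(U^n(K),X^n,Y^n)\in\setu_\nu^n(P_{UXY})$ with high probability, hence $(U^n(K),Y^n)$ is jointly typical by consistency (Theorem~\ref{theorem:consistency}); uniqueness comes from a union bound over the $\approx 2^{n(I(U;Y)-\nu)}$ wrong codewords in the bin, each jointly typical with $Y^n$ with probability $\lesssim 2^{-n(I(U;Y)-\nu-2\nu')}$ by Lemma~\ref{lemma:joint}, which is $o(1)$ for $\nu'\ll\nu$. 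This gives $\Pr\{K\ne L\}\le\epsilon$ and $|\mathcal{K}|\le 2^{cn}$, and $\tfrac1n H(K)\ge I(U;X)-\gamma$ because $K=\Phi(X^n)$ yields $H(K)=nH(X)-H(X^n\mid K)$, while $H(X^n\mid K=k)$ is at most the log-cardinality of the set of source blocks jointly typical with $U^n(k)$, which the AEP bounds of Section~\ref{sec:preli} control by $n(H(X\mid U)+\delta)$ (the covering-failure event contributing negligibly). Hence $C_{CR}\ge\sup_{U}I(U;X)$ over the strict constraint; to reach the closed constraint one replaces $U$ by the $\lambda$-mixture $U'$ that emits $U$ with probability $1-\lambda$ and a dummy symbol otherwise (this keeps $U'\in\mathcal{U}$, scales both $I(U';X)$ and $I(U';Y)$ by $1-\lambda$, and $I(U';X)\uparrow I(U;X)$ as $\lambda\downarrow0$), while the degenerate case $C(W)=0$ --- where feasibility forces $I(U;X)=I(U;Y)$, i.e.\ $U$ depends on $(X,Y)$ only through their G\'acs--K\"orner common part --- is treated directly by extracting that common part with no communication. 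One also notes that restricting to $H(U)<\infty$ does not change the supremum, since $I(U;X)\le H(X)<\infty$ always and can be approached through finite-entropy quantizations of $U$.

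\emph{Converse.} Let $(K,L)$ be permissible for block length $n$ with $\Pr\{K\ne L\}\le\epsilon$, $|\mathcal{K}|\le 2^{cn}$, $\tfrac1n H(K)>H-\gamma$. Since $K=\Phi(X^n)$, $H(K)=I(K;X^n)\le nH(X)<\infty$ by \eqref{eq: ConstraintOnXY}. Fano gives $H(K\mid L)\le 1+\epsilon cn$, so with $L=\Psi(Y^n,Z^n)$ and data processing $H(K)\le I(K;Y^nZ^n)+1+\epsilon cn$. The channel enters through $I(K;Z^n\mid Y^n)\le I(X^n;Z^n\mid Y^n)\le I(X^nY^n;Z^n)=I(X^n;Z^n)\le\sum_i I(T_i;Z_i)\le nC(W)$, using $Z^n\circlearrow T^n\circlearrow(X^n,Y^n)$ with $T^n=\Lambda(X^n)$ and the memorylessness of $W$; combined with $H(K)=I(K;X^n)$ this gives $I(K;X^n)-I(K;Y^n)\le nC(W)+1+\epsilon cn$. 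Single-letterization then proceeds in the Wyner--Ziv fashion: put $U_i=(K,X^{i-1},Y_{i+1}^n)$, for which $U_i\circlearrow X_i\circlearrow Y_i$ holds because the source is memoryless and $K$ is a function of $X^n$, and use the Csisz\'ar sum identity to get $H(K)\le\sum_i I(U_i;X_i)$ and $I(K;X^n)-I(K;Y^n)=\sum_i\bigl(I(U_i;X_i)-I(U_i;Y_i)\bigr)$; with $Q$ uniform on $\{1,\dots,n\}$ and $U=(Q,U_Q)$, $X=X_Q$, $Y=Y_Q$ one obtains $H-\gamma<\tfrac1n H(K)\le I(U;X)+\delta_n$ and $I(U;X)-I(U;Y)\le C(W)+\delta_n$ with $\delta_n=(1+\epsilon cn)/n\to0$. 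Here $U\in\mathcal{U}$ holds exactly because the $X$- and $Y$-coordinates inside $U$ have finite second log-moments by \eqref{eq: ConstraintOnXY} (the finite-alphabet ingredients $K,Q$ adding a bounded amount), and $H(U)<\infty$ for the same reason. Letting $n\to\infty$, then $\epsilon,\gamma\to0$, and invoking upper semicontinuity of the value function in the constraint level to absorb the $\delta_n$-relaxation, yields $H\le R^\star$.

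\emph{Main obstacle.} The genuinely new difficulties are all due to the infinite alphabet. Carrying the covering, packing and Markov-lemma steps through unified typicality is the reason \eqref{eq: ConstraintOnXY} and \eqref{eq:SetU} appear, and verifying that the converse auxiliary $U$ actually lies in $\mathcal{U}$ is a place where \eqref{eq: ConstraintOnXY} is used crucially. The hardest step, however, is the one forced by the discontinuity of Shannon entropy on infinitely supported distributions: the customary ``perturb to a slightly suboptimal $U$'' and ``pass to the limit in the blocklength'' moves are no longer free, so proving that the supremum over the strict constraint $I(U;X)-I(U;Y)<C(W)$ equals the one over the closed constraint, handling the degenerate $C(W)=0$ case, and controlling the $\delta_n$-relaxed region in the converse are where I expect the argument to need real care.
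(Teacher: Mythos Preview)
Your proposal is essentially the paper's proof: the achievability is the same Wyner--Ziv binning scheme over a random $P_U^n$ codebook with unified typicality replacing strong typicality (covering via Lemma~\ref{lemma:joint}, packing via the same lemma, and the Markov lemma Theorem~\ref{theorem:Markov} licensed by the moment condition defining $\mathcal{U}$), and the converse is the same single-letterization with $U=(K,X^{J-1},Y_{J+1}^n,J)$, Fano plus data processing to bound $I(U;X)-I(U;Y)$ by $C(W)+o(1)$, and a verification that this $U$ lies in $\mathcal{U}$ using \eqref{eq: ConstraintOnXY} and \eqref{cardinalitycorrelated}. The only notable differences are cosmetic---you lower-bound $H(K)$ via $H(X^n)-H(X^n\mid K)$ whereas the paper bounds $\Pr\{K=u^n(i,j)\}$ directly, and you are in fact more explicit than the paper about the boundary issues (strict versus closed constraint in the direct part, and absorbing the $\delta_n$-relaxation in the converse).
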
 
\begin{remark}
In contrast to Shannon message transmission, CR shows a performance gain in terms of rate within the identification scheme. We can increase the identification capacity by taking advantage of correlation \cite{corrDMC}. For this reason, CR generation for future communication networks is a central research question in large 6G research projects \cite{researchgroup1}\cite{researchgroup2}. The goal within these projects is to experimentally demonstrate the applications of CR generation in 6G communication systems. In particular, use cases for CR generation are being considered when no communication over the channel is necessary. It is also worth mentioning that CR is highly relevant in the modular coding scheme for secure communication, where CR can be used as a seed \cite{semanticsecurity}. CR is a useful resource for coding over AVCs because we require only a little amount of CR compared to the set of messages.
\end{remark}
\section{Direct Proof of Theorem \ref{mainTheorem}} \label{sec: directproof}
In this section, we prove the direct part of Theorem \ref{mainTheorem}. The proof proceeds, to some extent, in the same way as in in \cite{part2}, where the code construction is based on the same type of binning as for the Wyner-Ziv problem. Instead of strong typicality, we use the concept of unified typicality in the encoding/decoding metrics and in the error probability analysis.

We first justify the use of the concept of unified typicality in the encoding/decoding metrics. It is easy to verify that \eqref{eq: ConstraintOnXY} and the definition of the set $\mathcal{U}$ in \eqref{eq:SetU} imply that
\begin{equation}
H(P_X), H(P_Y), H(P_{U|X}) < \infty. \label{eq: InfiniteInputEntropy}
\end{equation}
It follows from \eqref{eq: InfiniteInputEntropy} that
\begin{align}
H(P_U) & = H(P_{UX})-H(P_{X|U}) \nonumber\\
& \leq  H(P_{UX}) \nonumber \\
& = H(P_{X}) + H(P_{U|X}) \nonumber\\
& < \infty.
\end{align}
Since $H(P_{U})$, $H(P_{X})$ and $H(P_Y)$ are finite, all possible combinations of joint entropy are finite. Therefore, we can apply Theorem \ref{theorem:Typicality} and Theorem \ref{theorem:JointTyp} on marginal and joint probability distributions, respectively. 
Let $U$ be an arbitrary random variable on $\setu$ satisfying $U \circlearrow{X} \circlearrow{Y}$ and  $I(U;X)-I(U;Y) \leq C(W)$.
In the following, we show that $H=I(U;X)$ is an achievable CR rate for our system model. Let the probability distribution $P_{U|X}$ be given. Let $0<\nu<\nu_1<\nu_2<\nu_3$. \\

{\bf{Code Construction}}: We generate $N_1N_2$ codewords $U^n(i,j),\quad i=1,\ldots,N_1,\ j=1,\ldots,N_2$ by choosing the $n.(N_1N_2)$ symbols $u_l(i,j)$, $l=1,\ldots,n$, independently at random using $P_U$ (computed from $P_{XU}$). Without loss of generality, assume that the
distribution of $U$ is a possible type for block length $n$. Each realization $u^n(i,j)$ of $U^n(i,j)$ is known to Terminal $A$ and Terminal $B$. For some $\delta> \frac{3}{2} \nu_1$, let
{{\begin{align}
		N_{1}&=2^{\left(n[I(U;X)-I(U;Y)+4\delta]\right)} \nonumber, \\
		N_{2}&=2^{\left(n[I(U;Y)-2\delta]\right)}. \nonumber
		\end{align}}}
	
{\bf{Encoder}}: Let $(x^n,y^n)$ be any realization of $(X^n,Y^n)$. Given $x^n$ with $(x^n,y^n) \in \setu_{\nu_1}^n(P_{XY})$, the task of the encoder consists in finding a pair $(i,j)$ such that $\left(x^n,u^n(i,j) \right)$ are jointly typical. If such a pait $(i,j)$ exists, we set $f(x^n)=i$. If not successful, then $f({x}^n)$ is set to $N_1+1$ and $K$ is set to a constant sequence ${u}^n_0$ different from all the ${u}^n(i,j)$s and known to both terminals. We choose $\nu$ to be sufficiently small such that
\begin{align}
\frac{\log \lVert f \rVert}{n}&=\frac{\log(N_1+1)}{n} \nonumber \\
&\leq C(W)-\delta^\prime,
\label{inequalitylogfSISO}
\end{align}
for some $\delta^\prime$, where $\lVert f \rVert$ denotes the cardinality of the codomain of the function $f$. The message $i^{\star}=f({x}^n)$, with $i^{\star}\in\{1,\ldots,N_1+1\}$, is mapped into a sequence ${t}^n$ using a suitable \textit{forward error correcting code} with rate $\frac{\log \lVert f \rVert}{n}$ satisfying \eqref{inequalitylogfSISO} and with error probability lower than $\frac{\epsilon}{2}$ for sufficiently large $n$. The sequence ${t}^n$ is transmitted over the channel $W$. \\

{\bf Decoder}: Let ${z}^n$ denote the channel output sequence. Terminal $B$ converts the channel output $z^n$ into $\hat{i}^{\star}$. From the knowledge of $\hat{i}^{\star}$ and $y^n$, the task of the decoder is to find ${j}$ such that $\left(u^n(\hat{i}^{\star},{j}),y^n\right)$ are jointly typical. If such an index $j$ exists, let $L({y}^n,\hat{i}^{\star})={u}^n(\hat{i}^{\star},{j})$. If there is no such ${u}^n(\hat{i}^{\star},{j})$ or there are several, $L$ is set to ${u}^n_0$.\\

{\bf Error Analysis}:  We consider the following error events. 
\begin{enumerate}
	\item Suppose that $(x^n,y^n)$ are not jointly typical:
	\begin{equation*}
	\sete_1=\left\{ (X^n, Y^n) \notin \setu_{\nu_1}^n(P_{XY}) \right\}.
	\end{equation*}
 \item Suppose that $(x^n,y^n) \in \setu_{\nu_1}^n(P_{XY})$ but the encoder fails to find a pair $(i,j)$ such that $\left(u^n(i,j),x^n\right) \in \setu_{\nu_2}^n(P_{UX}) $:
 \begin{equation*}
 \sete_2= \bigcap_{\substack{i=1,\ldots,N_1 \\j=1,\ldots,N_2}} \big\{ \left(U^n(i,j),X^n\right) \notin \setu_{\nu_2}^n(P_{UX}) \big\}.
 \end{equation*}
 \item Suppose that $(x^n,y^n) \in \setu_{\nu_1}^n(P_{XY})$ and the encoder outputs a pair $(i,j)$ such that $\left(u^n(i,j),x^n\right) \in \setu_{\nu_2}^n(P_{UX})$. However, the decoder outputs $\tilde{j}\neq j$ such that $\left(u^n(\hat{i},\tilde{j}),y^n\right) \in \setu_{\nu_2}^n(P_{UY}) $:
 \begin{equation*}
 \sete_3=\cup_{\substack{\tilde{j}=1,\ldots,N_2 \\ \tilde{j}\neq j}} \left\{ \left(U^n(\hat{i},\tilde{j}),Y^n\right) \in \setu_{\nu_2}^n(P_{UY}) \right\}.
 \end{equation*}
 \item  Suppose that $(x^n,y^n) \in \setu_{\nu_1}^n(P_{XY})$ and the encoder oututs a pair $(i,j)$ such that $\left(u^n(i,j),x^n\right) \in \setu_{\nu_2}^n(P_{UX})$. However, the decoder fails to find ${j}$ such that $\left(u^n(\hat{i},{j}),x^n,y^n\right) \in \setu_{\nu_3}^n(P_{UXY})$:
 \begin{equation*}
 \sete_4 = \bigcap_{j=1,\ldots,N_2}\left\{ \left(U^n(\hat{i},{j}),x^n,y^n\right) \notin \setu_{\nu_3}^n(P_{UXY}) \right\}.
 \end{equation*}
\end{enumerate}
Let $P_e$ denote the probability of the overall error event. 
\begin{equation*}
P_e \leq \Pr\{\sete_1\}+\Pr\{\sete_2\}+\Pr\{\sete_3\}+\Pr\{\sete_4\}.
\end{equation*}
An upper-bound for the overall error probability $P_e$ is established in the following.
\begin{align*}
\Pr\{\sete_1\}&=P_{XY}^n\left((\setu_{\nu_1}^n(P_{XY}))^c \right) \\
& =1-P_{XY}^n\left( \setu_{\nu_1}^n(P_{XY})\right)\\
&\overset{(a)}{\leq}\nu_1,
\end{align*}
where $(a)$ follows from Theorem \ref{theorem:JointTyp} since the sequence $(x^n,y^n)$ is drawn from the distribution $P_{XY}^n$. Note that $H(P_{XY})$ is finite.
\begin{align*}
\Pr\{\sete_2\} & =  P_X^n({\setu_\nu^n(P_{X})^c}) \Pr\big\{\sete_2|X^n \notin \setu_\nu^n(P_{X}) \big \} +  P_X^n({\setu_\nu^n(P_{X})})\Pr\big \{\sete_2|X^n \in \setu_\nu^n(P_{X}) \big \} \\
& {\leq} P_{X}^n\left((\setu_\nu^n(P_{X}))^c\right)
+  \Pr\left\{\sete_2|X^n \in \setu_\nu^n(P_{X}) \right \} \\
&\overset{(a)}{\leq} \nu + \prod_{\substack{i=1,\ldots,N_1 \\j=1,\ldots,N_2}} \Pr\bigg\{ \big(U^n(i,j),X^n \big) \notin \setu_{\nu_2}^n(P_{UX})|X^n \in \setu_\nu^n(P_{X}) \bigg\} \\
&\overset{(b)}{\leq} \nu+  \bigg(1-(1-\nu_2)2^{-n\big(I(U,X)+3\nu_2 \big)}\bigg)^{N_1N_2} \\
& \overset{(c)}{\leq} \nu+  \exp\left(-N_1N_2(1-\nu_2)2^{-n\big(I(U,X)+3\nu_2\big)}\right)\\
&\overset{(d)}{\leq} \nu+  \exp\left(-(1-\nu_2)2^{n(2\delta-3\nu_2)} \right)\\
&\overset{(e)}{\leq} \nu,
\end{align*}
where $(a)$ follows because the $N_1N_2$ events of the intersection are independent and from Theorem \ref{theorem:Typicality}, $(b)$ follows from Theorem \ref{theorem:JointTyp}, $(c)$ follows because $(1-x)^m\leq \exp(-mx)$, $(d)$ follows from the definition of $N_1$ and $N_2$ and $(e)$ follows because $\exp\left(-(1-\nu_2)2^{n(2\delta-3\nu_2)} \right)$ goes to zero when $n$ goes to infinity.

\begin{align*}
\Pr\{\sete_3\}& \overset{(a)}{\leq} \sum_{\tilde{j}\neq j} \Pr\left\{\left(U^n(\hat{i},\tilde{j}),Y^n\right) \in \setu_{\nu_2}^n(P_{UY})\right \} \\
& \overset{(b)}{<} N_2 \cdot 2^{-n(I(U,Y)-3{\nu_2})}, \\
& = 2^{-n (2\delta-3{\nu_2})}\\
&=0,\quad n\to \infty,
%& \implies \lim_{n\to \infty}\Pr\{\sete_3\}=0.
\end{align*}
where $(a)$ follows from the union bound and $(b)$ follows from Theorem \ref{lemma:joint}. 
\begin{align*}
\Pr\{\sete_4 \} &=\Pr \left[ \cap_{j=1,\ldots,N_2}\left\{ \left(U^n(\hat{i},{j}),x^n,y^n\right) \notin \setu_{\nu_3}^n(P_{UXY}) \right\}  \right] \\
& \overset{(a)}{=} \prod_{j=1}^{N_2} \Pr \left\{ \left(U^n(\hat{i},{j}),x^n,y^n\right) \notin \setu_{\nu_3}^n(P_{UXY})  \right\} \\
& \overset{(b)}{\leq} \nu_3^{N_2}\\
&= 0, \quad n \to \infty,
\end{align*}
where $(a)$ follows because the $N_2$ events of the intersection are independent and $(b)$ follows from Theorem \ref{theorem:Markov}. Note that $H(P_{UXY})$ is finite and the assumption \eqref{conditionVar} is satisfied.
Finally, when $n$ goes to infinity, the average error probability $P_e$ goes to zero.
\begin{align}
P_e  & =\sum_{i=1}^{4} \Pr\{\sete_i\} \nonumber\\
& \leq \nu+\nu_1 \\
&< \frac{\epsilon}{2}. \label{eq:overallError}
\end{align}
Let $I^\star=f(X^n)$ be the RV modeling the message encoded by Terminal $A$ and let $\hat{I}^\star$ be the RV modeling the message decoded by Terminal $B$. 
	We have:
	\begin{align}
	\Pr\{K\neq L\} 
	&\leq \Pr\{K\neq L|I^\star=\hat{I}^\star\}+ \Pr\{I^\star\neq\hat{I}^\star\}\nonumber\\
	&\overset{(a)}{\leq} P_e+\frac{\epsilon}{2} \nonumber\\
	& \overset{(b)}{\leq}  \epsilon, \nonumber 
	\end{align}
	where $(a)$ follows from the union bound and $(b)$ follows from \eqref{eq:overallError}. Thus, the pair $(K,L)$ satisfies \eqref{errorcorrelated}.
It remains to show that $(K,L)$ satisfy \eqref{cardinalitycorrelated} and \eqref{ratecorrelated}.
Clearly, \eqref{cardinalitycorrelated} is satisfied  for $c=2\left[I(U;X)+2\delta\right]$, $n$ sufficiently large:
\begin{align}
|\mathcal{K}|&=N_1 N_2+1 \nonumber \\
&= 2^{(n\left[I(U;X)+2\delta\right])}+1 \nonumber \\
&\leq 2^{(2n\left[I(U;X)+2\delta\right])}.\nonumber
\end{align}
For a fixed $u^n(i,j) \in \setu^n$, it holds that
\begin{align*}
&\Pr\{K={u}^n(i,j)\} \\
&\overset{(a)}{=}\sum_{{x}^n\in \setu_{\nu}^n(P_X)}\Pr\{K={u}^n(i,j)|X^n={x}^n\}P_{X}^n({x}^n) \\
&
\overset{(b)}{\leq} 2^{\left(-n(I(U;X)+3\nu_2)\right)}, 
\end{align*}
where $(a)$ follows because for $(x^n,{u}^n(i,j))$ being not jointly typical, we have $\Pr \{K={u}^n(i,j)|X^n={x}^n\}=0$ and $(b)$ follows from Theorem \ref{theorem:JointTyp}. This yields
{{\begin{align}
		H(K) & \geq 
		n (I(U;X)+ 3\nu_2) \nonumber\\
		& = n H+o(n). \nonumber
		\end{align}}}
	Thus, (\ref{ratecorrelated}) is satisfied. 
	This completes the direct proof.
\section{Converse proof}
\label{converse}
Let $(K,L)$ be a permissible pair w.r.t. a fixed CR-generation protocol of block-length $n,$ as presented in Section \ref{systemmodel}. That means that the pair $(K,L)$ satisfies \eqref{errorcorrelated},
 \eqref{cardinalitycorrelated} and \eqref{ratecorrelated}.

We show that any achievable CR rate $H$ satisfies
\begin{align}
H < \underset{ \substack{U \in \mathcal{U} \\{\substack{U \circlearrow{X} \circlearrow{Y}\\ I(U;X)-I(U;Y) \leq \color{black}C(W)\color{black}}}}}{\sup} I(U;X)+\epsilon'', \nonumber 
\end{align}
where 
\begin{equation}
\mathcal{U}=\Big\{U: \quad \mathbb{E}\left[ \log^2(P_{U|X}(U|X=x))|X=x \right]<\infty,\ \forall x \in \setx \Big\} \nonumber
\end{equation}
and where $\epsilon''>0$ is an arbitrarily small positive constant. 

In our proof, we will use Lemma 17.12 in \cite{IT_CiKo}.
Let $J$ be a random variable uniformly distributed on $\{1,\dots, n\}$ and independent of $K$, $X^n$ and $Y^n$. We further define $U=(K,X_{1},\dots, X_{J-1},Y_{J+1},\dots, Y_{n},J).$ It holds that $U \circlearrow{X_J} \circlearrow{Y_J}.$ In what follows, we will show that $U \in \mathcal{U}.$

\begin{comment}
Notice also that 
\begin{align}
H(U)&= H(K,X_{1},\dots, X_{J-1},Y_{J+1},\dots, Y_{n},J) \nonumber \\&\leq H(K)+H(J)+\sum_{i=1}^{J-1} H(X_i) + \sum_{i=J+1}^{n}H(Y_i) \nonumber \\
&\leq \log \lvert\mathcal{K} \rvert + H(J)+\sum_{i=1}^{J-1} H(X_i) + \sum_{i=J+1}^{n}H(Y_i) \nonumber \\
&\overset{(a)}{\leq} nc+\log(n)+(J-1)H(X)+(n-J)H(Y) \nonumber\\
&\leq nc+\log(n)+n\left(H(X)+H(Y)\right) \nonumber\\
&\overset{(b)}{<}\infty,
\end{align}
where $(a)$ follows from \eqref{cardinalitycorrelated} and $(b)$ follows from the fact that $\mbb E\left[\log^2 (P_{X})\right]$ and $\mbb E\left[\log^2 (P_{Y})   \right]$ are both finite and from the fact that $H(X)\leq \mbb E\left[\log^2 (P_{X})\right]$ and $H(Y)\leq\mbb E\left[\log^2 (P_{Y})\right].$
\color{black}
\end{comment}
\begin{claim}
	\label{finitemean}
	For a fixed block-length $n$ and $\forall x \in \mathcal{X}:$
 \begin{align*}
  \color{black}\mathbb{E}\left[\log^{2} P_{U|X_{J}=x}(U|X_{J}=x)|X_{J}=x \right]<\infty. \color{black}
 \end{align*}
\end{claim}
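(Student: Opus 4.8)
The plan is to factor the conditional law $P_{U\mid X_J=x}$ along the natural order of the components of $U=(K,X_1,\dots,X_{J-1},Y_{J+1},\dots,Y_n,J)$, to bound the second moment of the logarithm of each factor separately, and to recombine the bounds through the elementary inequality $(a_1+\dots+a_m)^2\le m(a_1^2+\dots+a_m^2)$.

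First I would fix $x\in\mathcal{X}$ and additionally condition on $J=j$. Since $J$ is uniform on $\{1,\dots,n\}$ and independent of the source, since $(X_i,Y_i)_{i\neq j}$ are independent of $X_j$, and since the ``future'' outputs $Y_{j+1},\dots,Y_n$ live on the source coordinates $j+1,\dots,n$ and are therefore independent of $X_1,\dots,X_j$, one obtains the exact factorization, valid for every $u=(k,x_1,\dots,x_{j-1},y_{j+1},\dots,y_n,j)$ with $P_{U\mid X_J=x}(u\mid X_J=x)>0$,
\begin{equation*}
P_{U\mid X_J=x}(u\mid X_J=x)=\frac{1}{n}\Big(\prod_{i=1}^{j-1}P_X(x_i)\Big)\Big(\prod_{i=j+1}^{n}P_Y(y_i)\Big)\,q(k),
\end{equation*}
where $q$ denotes the conditional pmf of $K=\Phi(X^n)$ given $\{X_1=x_1,\dots,X_{j-1}=x_{j-1},\,X_j=x,\,Y_{j+1}=y_{j+1},\dots,Y_n=y_n\}$. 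The decoupling of the future outputs $Y_{j+1},\dots,Y_n$ into a product of marginals $P_Y$ is the only place where memorylessness of $P_{XY}$ enters; the rest is conditioning bookkeeping.

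Taking logarithms and applying $(a+b+c+d)^2\le 4(a^2+b^2+c^2+d^2)$ reduces the claim to the finiteness, for the fixed block-length $n$, of the four quantities $\log^2 n$, $\mathbb{E}\big[(\sum_{i=1}^{J-1}\log P_X(X_i))^2\mid X_J=x\big]$, $\mathbb{E}\big[(\sum_{i=J+1}^{n}\log P_Y(Y_i))^2\mid X_J=x\big]$, and $\mathbb{E}\big[\log^2 q(K)\mid X_J=x\big]$. The first is immediate. For the second and third I would again condition on $J=j$, so that the summands become i.i.d.; the Cauchy--Schwarz inequality then yields the bounds $(j-1)^2\,\mathbb{E}[\log^2 P_X(X)]\le n^2\,\mathbb{E}[\log^2 P_X(X)]$ and $n^2\,\mathbb{E}[\log^2 P_Y(Y)]$, which are finite by \eqref{eq: ConstraintOnXY}. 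For the fourth quantity I would use that $K$ takes values in the \emph{finite} set $\mathcal{K}$ with $|\mathcal{K}|\le 2^{cn}$: since the map $t\mapsto t\log^2 t$ extends continuously to $[0,1]$ (with value $0$ at $t=0$) and is therefore bounded there by some finite constant $M$, every pmf $q$ supported on $\mathcal{K}$ satisfies $\sum_k q(k)\log^2 q(k)\le|\mathcal{K}|\,M\le 2^{cn}M$, uniformly over the conditioning, so the fourth quantity is at most $2^{cn}M$.

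I do not expect a genuine obstacle; the proof is essentially bookkeeping. The two points that need some care are (i) writing the factorization correctly, in particular observing that the future $Y$-coordinates contribute only marginal factors $P_Y$, and (ii) recalling that finiteness is needed only for a fixed block-length $n$, so the crude bound $2^{cn}M$ for the $K$-term, which would diverge as $n\to\infty$, is perfectly adequate here. Combining the four estimates gives
\begin{equation*}
\mathbb{E}\big[\log^2 P_{U\mid X_J=x}(U\mid X_J=x)\mid X_J=x\big]\le 4\big(\log^2 n+n^2\,\mathbb{E}[\log^2 P_X(X)]+n^2\,\mathbb{E}[\log^2 P_Y(Y)]+2^{cn}M\big)<\infty,
\end{equation*}
which is the assertion of the claim.
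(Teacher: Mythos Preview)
Your proposal is correct and follows the same factorization and the same treatment of the $J$, $X$, and $Y$ terms as the paper. The only difference lies in bounding the $K$-term $\mathbb{E}[\log^2 q(K)]$. You use the elementary observation that $t\mapsto t\log^2 t$ is bounded on $[0,1]$ by some constant $M$, so the sum over $\mathcal{K}$ is at most $|\mathcal{K}|M\le 2^{cn}M$. The paper instead proves a separate lemma: it splits $\mathcal{K}$ into $\{k:q(k)\le 1/e\}$ and its complement, applies Jensen's inequality to the concave function $y\mapsto\ln^2 y$ on $[e,\infty)$ for the first part, and bounds the second part trivially, arriving at a bound of order $\log^2|\mathcal{K}|\le n^2c^2$. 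Your bound is exponential in $n$ while the paper's is polynomial, but since the claim only asserts finiteness for a fixed block-length $n$, either suffices; your argument is shorter and avoids the concavity/Jensen machinery, whereas the paper's tighter bound would matter if one ever needed a uniform-in-$n$ estimate.
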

\begin{claimproof}
	We have
 \begin{align}
 	&P_{U|X_{J}=x}(U|X_{J}=x) \nonumber \\
 	&=P_{K,X_{1},\hdots,X_{J-1},Y_{J+1},\hdots,Y_{n},J|X_{J}=x}(K,X_{1},\hdots,X_{J-1},Y_{J+1},\hdots,Y_{n},J|X_{J}=x) \nonumber \\
 	&\overset{(a)}{=}P_{K|X_{1},\hdots,X_{J-1},Y_{J+1},\hdots,Y_{n},J,X_{J}=x}(K|X_{1},\hdots,X_{J-1},Y_{J+1},\hdots,Y_{n},J,X_{J}=x) \left[\prod_{i=1}^{J-1} P_{X}(X_{i})\right] \left[\prod_{i=J+1}^{n} P_{Y}(Y_{i})\right] P_{J}(J) \nonumber \\
 	&\overset{(b)}{=}\frac{1}{n}P_{K|X_{1},\hdots,X_{J-1},Y_{J+1},\hdots,Y_{n},J,X_{J}=x}(K|X_{1},\hdots,X_{J-1},Y_{J+1},\hdots,Y_{n},J,X_{J}=x)\prod_{i=1}^{J-1} P_{X}(X_{i}) \prod_{i=J+1}^{n} P_{Y}(Y_{i}), \nonumber 
 \end{align}
 where $(a)$ follows because $X_i,i=1\hdots n$ are mutually independent and because $J$ is independent of $X^n,$ and $(b)$ follows because $J$ is uniformly distributed on $\{1,\hdots,n\}.$

 Therefore, we have
 \begin{align*}
 	&\log P_{U|X_{J}=x}(U|X_{J}=x)  \nonumber \\
 	&=\log P_{K|X_{1},\hdots,X_{J-1},Y_{J+1},\hdots,Y_{n},J,X_{J}=x}(K|X_{1},\hdots,X_{J-1},Y_{J+1},\hdots,Y_{n},J,X_{J}=x) \nonumber \\
 	& \quad +\sum_{i=1}^{J-1}\log(P_{X}(X_{i}))+\sum_{i=J+1}^{n}\log(P_{Y}(Y_{i}))+\log(\frac{1}{n}).
 \end{align*}
 It follows that
 \begin{align*}
 	&\left(\log P_{U|X_{J}=x}(U|X_{J}=x)\right)^2  \nonumber \\
 	&=\bigg\lvert \log P_{U|X_{J}=x}(U|X_{J}=x)       \bigg\vert^{2} \nonumber \\
 	%&= \bigg\lvert \log P_{K|X_{1},\hdots,X_{J-1},Y_{J+1},\hdots,Y_{n},J|X_{J}=x}(K|X_{1},\hdots,X_{J-1},Y_{J+1},\hdots,Y_{n},J|X_{J}=x)+\sum_{i=1}^{J-1}\log(P_{X}(X_{i}))+\sum_{i=J+1}^{n}\log(P_{Y}(Y_{i}))+\log(\frac{1}{n})\bigg\rvert^{2} \nonumber \\
 	&\overset{(a)}{\leq} 2\left( \bigg\lvert \log P_{K|X_{1},\hdots,X_{J-1},Y_{J+1},\hdots,Y_{n},J,X_{J}=x}(K|X_{1},\hdots,X_{J-1},Y_{J+1},\hdots,Y_{n},J,X_{J}=x)+\log(\frac{1}{n})\bigg\rvert^{2}\right) \nonumber \\
 	&\quad+2\left(\bigg\lvert\sum_{i=1}^{J-1}\log(P_{X}(X_{i}))+\sum_{i=J+1}^{n}\log(P_{Y}(Y_{i}))\bigg\rvert^{2}\right) \nonumber \\
 	&\overset{(b)}{\leq} 4 \left(  \log^2 P_{K|X_{1},\hdots,X_{J-1},Y_{J+1},\hdots,Y_{n},J,X_{J}=x}(K|X_{1},\hdots,X_{J-1},Y_{J+1},\hdots,Y_{n},J,X_{J}=x)+\log^2(\frac{1}{n})\right) \nonumber \\
 	&\quad+4\left(\bigg\lvert\sum_{i=1}^{J-1}\log(P_{X}(X_{i}))\bigg\rvert^{2}+\bigg\lvert\sum_{i=J+1}^{n}\log(P_{Y}(Y_{i}))\bigg\rvert^{2}   \right) \nonumber \\
 	&\overset{(c)}{\leq} 4 \left(  \log^2 P_{K|X_{1},\hdots,X_{J-1},Y_{J+1},\hdots,Y_{n},J,X_{J}=x}(K|X_{1},\hdots,X_{J-1},Y_{J+1},\hdots,Y_{n},J,X_{J}=x)+\log^2(\frac{1}{n})\right) \nonumber \\
 	&\quad+4\left(\left(\sum_{i=1}^{J-1}\lvert\log(P_{X}(X_{i}))\rvert\right)^{2}+\left(\sum_{i=J+1}^{n}\lvert\log(P_{Y}(Y_{i}))\rvert\right)^{2}\right) \nonumber \\
 	&\overset{(d)}{\leq}  4 \left(  \log^2 P_{K|X_{1},\hdots,X_{J-1},Y_{J+1},\hdots,Y_{n},J,X_{J}=x}(K|X_{1},\hdots,X_{J-1},Y_{J+1},\hdots,Y_{n},J,X_{J}=x)+\log^2(\frac{1}{n})\right) \nonumber \\
 	&\quad+4\left((J-1)\sum_{i=1}^{J-1}\log^{2}(P_{X}(X_{i}))+(n-J)\sum_{i=J+1}^{n}\log^{2}(P_{Y}(Y_{i}))\right), \nonumber 
 \end{align*}
 where $(a)(b)$ follow because $\lvert x+y\rvert^2\leq 2\left(\lvert x \rvert^2+\lvert y\rvert^2\right),$ $(c)$ follows from the triangle's inequality and $(d)$ follows because $\left(\sum_{i=1}^{n}x_i\right)^2\leq n\sum_{i=1}^{n}x_i^2.$
\end{claimproof}
Therefore, it follows that
\begin{align}
	&\mathbb{E} \left[\left(\log P_{U|X_{J}=x}(U|X_{J}=x) \right)^2|X_{J}=x\right] \nonumber \\
	&\leq  4 \left(  \mathbb{E}\left[\log^2 P_{K|X_{1},\hdots,X_{J-1},Y_{J+1},\hdots,Y_{n},J,X_{J}=x}(K|X_{1},\hdots,X_{J-1},Y_{J+1},\hdots,Y_{n},J,X_{J}=x)|X_{J}=x\right]+\log^2(\frac{1}{n})\right) \nonumber \\
	&\quad+4\left((J-1)^{2}\mathbb{E}\left[\log^{2}(P_{X}(X))\right]+(n-J)^2\mathbb{E}\left[\log^{2}(P_{Y}(Y))\right]\right) \nonumber \\
	&\leq  4 \left(  \mathbb{E}\left[\log^2 P_{K|X_{1},\hdots,X_{J-1},Y_{J+1},\hdots,Y_{n},J,X_{J}=x}(K|X_{1},\hdots,X_{J-1},Y_{J+1},\hdots,Y_{n},J,X_{J}=x)|X_{J}=x\right]+\log^2(\frac{1}{n})\right) \nonumber \\
	&\quad+4\left(n^2\left(\mathbb{E}\left[\log^{2}(P_{X}(X))\right]+\mathbb{E}\left[\log^{2}(P_{Y}(Y))\right]\right)\right) \label{bound1}
\end{align}
Since by assumption $\mathbb{E}\left[\log^{2}(P_{X}(X))\right]$ and $\mathbb{E}\left[\log^{2}(P_{Y}(Y))\right]$ are both finite, it remains to prove that $\mathbb{E}\left[\log^2 P_{K|X_{1},\hdots,X_{J-1},Y_{J+1},\hdots,Y_{n},J,X_{J}=x}(K|X_{1},\hdots,X_{J-1},Y_{J+1},\hdots,Y_{n},J,X_{J}=x)|X_{J}=x\right]$ is finite.
It holds using the law of total expectation that
\begin{align}
&\mathbb{E}\left[\log^2 P_{K|X_{1},\hdots,X_{J-1},Y_{J+1},\hdots,Y_{n},J,X_{J}=x}(K|X_{1},\hdots,X_{J-1},Y_{J+1},\hdots,Y_{n},J,X_{J}=x)|X_{J}=x\right] \nonumber \\
&=\sum_{x_1, ,\hdots,x_{j-1},y_{j+1},\hdots,y_{n},j} P_{X_1, ,\hdots,X_{j-1},Y_{j+1},\hdots,Y_{n},J|X_J=x}(x_1, ,\hdots,x_{j-1},y_{j+1},\hdots,y_{n},j|X_J=x) \nonumber \\
&\quad  \times \mathbb{E} \bigg [\log^2 P_{K|X_{1},\hdots,X_{J-1},Y_{J+1},\hdots,Y_{n},J,X_{J}=x}(K|X_{1},\hdots,X_{J-1},Y_{J+1},\hdots,Y_{n},J,X_{J}=x)|X_1=x_1, ,\hdots,X_{J-1}=x_{j-1} \nonumber
\\ & \quad ,Y_{J+1}=y_{j+1},\hdots,Y_{n}=y_{n},J=j,X_{J}=x\bigg]
\end{align}

Consider $S=(X_1,\hdots, X_{J-1},Y_{J+1},\hdots,Y_{n},J,X_J)$ 
and let $s=(x_1,\hdots, x_{j-1},y_{j+1},\hdots,y_{n},j,x)$ be any realization of $S$ 

\begin{lemma}
	\label{upperboundvariance}
	For $\lvert \mc K \rvert \geq 3,$ it holds for sufficiently large $n$ that
	\begin{align}
	\mbb E\left[\log^{2}P_{K|S=s}(K|S=s)|S=s\right]<\infty.
	\nonumber \end{align}
\end{lemma}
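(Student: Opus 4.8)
The plan is to exploit that, for a CR-generation protocol of a \emph{fixed} block-length $n$, the alphabet $\mathcal{K}$ is finite. Indeed, since the permissible pair $(K,L)$ is assumed to satisfy \eqref{cardinalitycorrelated}, we have $|\mathcal{K}|\le 2^{cn}<\infty$. Conditioned on $S=s$ with $s=(x_1,\dots,x_{j-1},y_{j+1},\dots,y_n,j,x)$, the only remaining randomness is $X_{j+1},\dots,X_n$, which are conditionally independent with $X_i\sim P_{X|Y}(\cdot|y_i)$; hence $K=\Phi(X^n)$, conditioned on $S=s$, equals $\Phi(x_1,\dots,x_{j-1},x,X_{j+1},\dots,X_n)$, so $P_{K|S=s}$ is an honest probability mass function supported on the finite set $\mathcal{K}$. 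Thus the quantity to be controlled is merely the second moment of the information content of a distribution on a finite set.

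The first step is to write this second moment as the finite sum
\begin{equation*}
\mathbb{E}\!\left[\log^{2}P_{K|S=s}(K|S=s)\mid S=s\right]=\sum_{k\in\mathcal{K}}P_{K|S=s}(k)\,\log^{2}P_{K|S=s}(k),
\end{equation*}
with the usual convention $0\cdot\log^{2}0=0$, so that atoms of probability zero drop out. The second step is to bound each summand by an absolute constant: the map $\varphi\colon[0,1]\to\mathbb{R}$, $\varphi(t)=t\log^{2}t$ for $t>0$ and $\varphi(0)=0$, is continuous on the compact interval $[0,1]$, hence bounded, and the elementary computation $\tfrac{d}{dt}\,t(\ln t)^{2}=\ln t\,(\ln t+2)$ shows the maximum is attained at $t=e^{-2}$, so $\varphi(t)\le M:=\tfrac{4}{e^{2}(\ln 2)^{2}}$ for all $t\in[0,1]$. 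Summing over the at most $|\mathcal{K}|$ indices then yields
\begin{equation*}
\mathbb{E}\!\left[\log^{2}P_{K|S=s}(K|S=s)\mid S=s\right]\le|\mathcal{K}|\cdot M\le 2^{cn}M<\infty,
\end{equation*}
which is the claim; the clause ``for sufficiently large $n$'' is in fact vacuous here, since $2^{cn}$ is finite for every $n$.

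I do not expect a genuine obstacle: the single point that has to be handled with care is that $\mathcal{K}$ is finite for each block-length $n$, which is exactly what \eqref{cardinalitycorrelated} guarantees, so the entropy-discontinuity pathologies of countably infinite alphabets never enter for this quantity. If, for later steps of the converse, one wanted a bound that grows only polynomially in $n$ rather than the crude exponential $2^{cn}M$, I would instead split
\begin{equation*}
\mathbb{E}\!\left[\log^{2}P_{K|S=s}(K|S=s)\mid S=s\right]=\operatorname{Var}\!\big(\log P_{K|S=s}(K|S=s)\mid S=s\big)+H(K\mid S=s)^{2},
\end{equation*}
bound $H(K\mid S=s)\le\log|\mathcal{K}|\le cn$, and estimate the conditional varentropy of a distribution supported on $|\mathcal{K}|\ge 3$ atoms by a cardinality-dependent bound; this is the only place where the hypothesis $|\mathcal{K}|\ge 3$ would actually be used. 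Either way, inserting the resulting finite bound into \eqref{bound1}, together with the standing assumptions that $\mathbb{E}[\log^{2}P_X(X)]$ and $\mathbb{E}[\log^{2}P_Y(Y)]$ are finite, gives $\mathbb{E}[\log^{2}P_{U|X_J=x}(U|X_J=x)\mid X_J=x]<\infty$ for every $x\in\mathcal{X}$, i.e.\ $U\in\mathcal{U}$, as needed by the surrounding argument.
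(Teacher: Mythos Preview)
Your proof is correct. For the statement as written---mere finiteness of the conditional second moment for a fixed block-length $n$---your argument is complete: $\mathcal{K}$ is finite by \eqref{cardinalitycorrelated}, the function $t\mapsto t\log^{2}t$ is bounded on $[0,1]$ by $M=4/(e^{2}\ln^{2}2)$, and summing gives at most $|\mathcal{K}|\,M\le 2^{cn}M<\infty$. You are also right that the hypothesis $|\mathcal{K}|\ge 3$ plays no role in this route, and that ``for sufficiently large $n$'' is vacuous here.

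The paper, however, works harder to obtain a bound that is \emph{polynomial} in $n$ rather than your exponential $2^{cn}M$. It splits $\mathcal{K}$ into $\mathcal{K}_{H}(s)=\{k:P_{K|S=s}(k)>1/e\}$ and $\mathcal{K}_{L}(s)=\mathcal{K}\setminus\mathcal{K}_{H}(s)$; since $|\mathcal{K}_{H}(s)|\le 2$, the assumption $|\mathcal{K}|\ge 3$ ensures $\mathcal{K}_{L}(s)\neq\varnothing$. The high-probability part contributes at most $1$, and on the low-probability part Jensen's inequality for the concave function $y\mapsto\ln^{2}y$ on $[e,\infty)$ yields $P_{L}(s)\ln^{2}\!\big(|\mathcal{K}_{L}(s)|/P_{L}(s)\big)$, which after elementary maximizations is bounded by $\ln^{2}|\mathcal{K}|+\tfrac{2}{e}\ln|\mathcal{K}|+O(1)\le \ln(2)^{2}c^{2}n^{2}+O(n)$. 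So the paper's argument is the sharper of the two, and it is precisely where the hypothesis $|\mathcal{K}|\ge 3$ is actually consumed. For the surrounding Claim~\ref{finitemean}---which only asserts finiteness for each fixed $n$ so that $U\in\mathcal{U}$---either bound suffices, and your more elementary argument is adequate; the paper's polynomial bound would matter only if one later needed uniform control in $n$, which the converse as written does not.
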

\begin{proof}
	We have
	\begin{align}
	\mbb E \left[\log^{2} P_{K|S=s}(K|S=s)|S=s\right]
	=\frac{1}{\ln(2)^2}\mbb E \left[\ln^{2} P_{K|S=s}(K|S=s)|S=s\right].
	\nonumber \end{align}
	
	Define the following two sets
	\begin{align}
	\mc K_{L}(s)=\{k\in \mc K: P_{K|S=s}(k|S=s)\leq \frac{1}{e}  \}
	\nonumber \end{align}
	and
	\begin{align}
	\mc K_{H}(s)=\{k\in \mc K: P_{K|S=s}(k|S=s)> \frac{1}{e}  \}.
	\nonumber \end{align}
	Clearly, it holds that $\lvert \mc K_{L}(s) \rvert +\lvert \mc K_{H}(s) \rvert =\lvert \mc K\rvert.$ 
	Let 
	\begin{align}
	P_{L}(s)=\sum_{k\in \mc K_{L}(s)} P_{K|S=s}(k|S=s)
	\nonumber \end{align}
	and
	\begin{align}
	P_{H}(s)=\sum_{k\in \mc K_{H}(s)} P_{K|S=s}(k|S=s).
	\nonumber \end{align}
	
	Notice first that
	\begin{align}
	1\geq P_{H}(s)> \lvert \mc K_{H}(s) \rvert \frac{1}{e}
	\nonumber \end{align}
	yielding
	\begin{align}
	\lvert \mc K_{H}(s) \rvert<e.
	\nonumber
	\end{align}
	Therefore,
	\begin{align}
	\lvert \mc K_{H}(s) \rvert\leq 2. \nonumber
	\end{align}
	Since $\lvert \mc K\rvert \geq 3,$ it follows that
	\begin{align}
	\lvert \mc K_L(s) \rvert=\lvert \mc K \rvert - \lvert \mc K_H(s)\rvert \geq 1. \nonumber
	\end{align}
	
	Now, it holds that
	\begin{align}
	&\mbb E \left[\ln^{2} P_{K|S=s}(K|S=s)|S=s\right]\nonumber \\
	&=\sum_{k\in \mc K_{L}(s)} P_{K|S=s}(k|S=s)\ln^2\frac{1}{P_{K|S=s}(k|S=s)}+ \sum_{k\in \mc K_{H}(s)} P_{K|S=s}(k|S=s)\ln^2\frac{1}{P_{K|S=s}(k|S=s)}.
	\label{termsinthesum} \end{align}
	We we will find appropriate upper-bound for each term in the right-hand side of \eqref{termsinthesum}. On the one hand, we have
	\begin{align}
	&\sum_{k\in \mc K_{L}(s)} P_{K|S=s}(k|S=s)\ln^2\left(\frac{1}{P_{K|S=s}(k|S=s)}\right) \nonumber \\
	&=P_{L}(s)\sum_{k\in \mc K_{L}(s)} \frac{P_{K|S=s}(k|S=s)}{P_{L}(s)}\ln^2\left(\frac{1}{P_{K|S=s}(k|S=s)}\right) \nonumber \\
	&\overset{(a)}{\leq} P_{L}(s)\ln^2\left( \sum_{k\in \mc K_{L}(s)} \frac{P_{K|S=s}(k|S=s)}{P_{L}(s)}\frac{1}{P_{K|S=s}(k|S=s)}\right) \nonumber \\
	&=P_{L}(s)\ln^{2}\frac{\lvert \mc K_{L}(s)\rvert}{P_{L}(s)},\nonumber \\
	\nonumber \end{align}
	where $(a)$ follows because $\ln^{2}(y)$ is concave in the range $y\geq e$ and because for any $k \in \mc K_{L}(s),$  $\frac{1}{P_{K|S=s}(k|S=s)}\geq e.$
	
	On the other hand, we have
	\begin{align}
	&\sum_{k\in \mc K_{H}(s)} P_{K|S=s}(k|S=s)\ln^2\frac{1}{P_{K|S=s}(k|S=s)} \nonumber\\
	&\overset{(a)}{\leq} \sum_{k\in \mc K_{H}(s)}  P_{K|S=s}(k|S=s) \ln^2(e) \nonumber \\
	&\leq 1,
	\nonumber \end{align}
	where $(a)$ follows because $\ln^2(1/y)$ is non-increasing in the range $0<y\leq 1$ and because $\frac{1}{e}<P_{K|S=s}(k|S=s)\leq 1$ for $k\in \mc K_{H}(s).$
	
	This implies using the fact that $\lvert \mc K \rvert \geq \lvert \mc K_{L}(s)\rvert\geq 1$
	that
	\begin{align}
	&\mbb E \left[\ln^{2} P_{K|S=s}(K|S=s)|S=s\right] \nonumber \\
	&\leq 1+ P_{L}(s)\ln^{2}\frac{\lvert \mc K_{L}(s)\rvert}{P_{L}(s)} \nonumber \\
	&=1+P_{L}(s)\left(\ln\left(\lvert\mc K_{L}(s)\rvert \right)+\ln\frac{1}{P_{L}(s)}  \right)^{2} \nonumber \\
	&\leq 1+P_{L}(s)\left(\ln\left(\lvert\mc K\rvert\right)+\ln\frac{1}{P_{L}(s)}  \right)^{2} \nonumber \\
	&= 1+P_{L}(s)\left(\ln\left(\lvert\mc K\rvert\right)^2+\ln^{2}\frac{1}{P_{L}(s)}+2\ln\left(\frac{1}{P_{L}(s)}\right) \ln\lvert\mc K\rvert \right) \nonumber \\
	&\overset{(a)}{\leq} 1+\ln\left(\lvert\mc K\rvert\right)^2+\frac{4}{e^2}+2\frac{1}{e}\ln\lvert\mc K\rvert \nonumber \\
	&= 1+\ln(2)^2\log\left(\lvert\mc K\rvert\right)^2+\frac{4}{e^2}+2\frac{\ln(2)}{e}\log\lvert\mc K\rvert \nonumber \\
	&\overset{(b)}{\leq} 1+\ln(2)^2n^{2}c^{2}+\frac{4}{e^2}+2\frac{\ln(2)}{e}nc \nonumber \\
	&<\infty,
 \end{align}
	where $(a)$ follows because $y\ln^2(1/y)$ and $y\ln(1/y)$ are maximized by $\frac{4}{e^2}$ and $\frac{1}{e}$ in the range $0<y\leq 1,$ respectively, and
	where $(b)$ follows because $\frac{\log \lvert \mc K \rvert}{n} \leq c$ 
	(from \eqref{cardinalitycorrelated}). 
	This proves Lemma \ref{upperboundvariance}.
\end{proof}
It follows using Lemma \ref{upperboundvariance} that $$\mathbb{E}\left[\log^2 P_{K|X_{1},\hdots,X_{J-1},Y_{J+1},\hdots,Y_{n},J,X_{J}=x}(K|X_{1},\hdots,X_{J-1},Y_{J+1},\hdots,Y_{n},J,X_{J}=x)|X_{J}=x\right]<\infty,$$ which implies that $\mathbb{E} \left[\left(\log P_{U|X_{J}=x}(U|X_{J}=x) \right)^2|X_{J}=x\right]<\infty.$ This completes the proof of Claim \ref{finitemean}.
Notice now that
{{\begin{align}
		H(K)&\overset{(a)}{=}H(K)-H(K|X^{n})\nonumber\\
		&=I(K;X^{n}) \nonumber\\
		&\overset{(b)}{=}\sum_{i=1}^{n} I(K;X_{i}|X_{1},\dots, X_{i-1}) \nonumber\\
		&=n I(K;X_{J}|X_{1},\dots, X_{J-1},J) \nonumber\\
		&\overset{(c)}{\leq }n I(U;X_{J}), \nonumber
		\end{align}}} where$(a)$ follows because $K=\Phi(X^n)$ and $(b)$ and $(c)$ follow from the chain rule for mutual information.
	
	We will show next that for some $\epsilon'(n)>0$
	
	$$ I(U;X_{J})-I(U;Y_{J})\leq C(W)+\epsilon'(n). $$
	
Applying Lemma \ref{lemma1} for $S=K$, $R=\varnothing$ with $V=(X_1,\hdots, X_{J-1},Y_{J+1},\hdots, Y_{n},J)$ yields
\begin{align}
&I(K;X^{n})-I(K;Y^{n}) 
\nonumber \\&=n[I(K;X_{J}|V)-I(K;Y_{J}|V)] \nonumber\\
&\overset{(a)}{=}n[I(KV;X_{J})-I(K;V)-I(KV;Y_{J})+I(K;V)] \nonumber\\
&\overset{(b)}{=}n[I(U;X_{J})-I(U;Y_{J})], 
\label{UhilfsvariableMIMO1}
\end{align}
where $(a)$ follows from the chain rule for mutual information and $(b)$ follows from $U=(K,V)$. \\
It results using (\ref{UhilfsvariableMIMO1}) that
\begin{align}
n[I(U;X_{J})-I(U;Y_{J})]
&=I(K;X^{n})-I(K;Y^{n}) \nonumber\\
&=H(K)-I(K;Y^{n})\nonumber \\ 
&=H(K|Y^n).
\label{star2MIMO2}
\end{align}
Next, we will show for some $\epsilon'(n)>0$ that
\begin{align}
\frac{H(K|Y^n)}{n}\leq C(W)+\epsilon'(n). \nonumber
\end{align}
We have
\begin{equation}
H(K|Y^{n})=I(K;Z^{n}|Y^{n})+H(K|Y^{n}Z^{n}).\label{boxed2}
\end{equation}
On the one hand, it holds that
\begin{align} 
I(K;Z^{n}|Y^{n})&\leq I(X^{n}K;Z^{n}|Y^{n}) \nonumber\\
& \overset{(a)}{\leq }I(T^n;Z^n|Y^{n})  \nonumber \\
& =  h(Z^n|Y^{n})- h(Z^n|T^n,Y^{n}) \nonumber \\
& \overset{(b)}{=}  h(Z^n|Y^{n})- h(Z^n|T^n) \nonumber \\
& \overset{(c)}{\leq }   h(Z^n)- h(Z^n|T^n) \nonumber \\
& = I(T^n;Z^n)  \nonumber \\
& \overset{(d)}{=} \sum_{i=1}^{n} I(Z_{i};T^n|Z^{i-1}) \nonumber \\
& = \sum_{i=1}^{n} h(Z_{i}|Z^{i-1})-h(Z_{i}|T^n,Z^{i-1}) \nonumber \\
& \overset{(e)}{=} \sum_{i=1}^{n} h(Z_{i}|Z^{i-1})-h(Z_{i}|T_{i}) \nonumber \\
& \overset{(f)}{\leq} \sum_{i=1}^{n} h(Z_{i})-h(Z_{i}|T_{i}) \nonumber \\
&=\sum_{i=1}^{n} I(T_{i};Z_{i}) \nonumber \\
&\leq n C(W), \label{part1}
\end{align}
where $(a)$ follows from the Data Processing Inequality for continuous RVs \cite{dataprocessing}, $(b)$ follows because $Y^{n}\circlearrow{X^{n}K}\circlearrow{T^n}\circlearrow{Z^{n}}$ forms a Markov chain, $(c)(f)$ follow because conditioning does not increase entropy, $(d)$ follows from the chain rule for mutual information and $(e)$ follows because $T_{1},\dots, T_{i-1},T_{i+1},\dots, T_{n},Z^{i-1} \circlearrow{T_{i}}\circlearrow{Z_{i}}$ forms a Markov chain. 
%\\
%Auskommentiert:
%\begin{proof}
%$U\circlearrow{X} \circlearrow{Y}$ means that $p_{U|X}=p_{U|XY}$. \\
%This implies that $h(U|X)=h(U|XY)$. \\
%"Conditioning does not increase entropy" holds for continuous RVs (monotonicity of differential entropy) \cite{cover}.\\
%$\implies I(U,X) \geq I(U,Y)$.
%\end{proof}
%Auskommentiert\\
\color{black}
On the other hand, it holds that
\begin{align}
H(K|Y^{n},Z^{n})&\overset{(a)}{\leq } H(K|L) \nonumber \\
&\overset{(b)}{\leq } 1+\log\lvert \mathcal{K} \rvert \Pr[K\neq L] \nonumber \\
&\overset{(c)}{\leq }1+\epsilon c n, \label{part2}
\end{align}
where (a) follows from $L=\Psi(Y^{n},Z^{n})$ in \eqref{KLSISOcorrelated}, 
(b) follows from Fano's Inequality and (c) follows from \eqref{errorcorrelated} and \eqref{cardinalitycorrelated}.

It follows from \eqref{boxed2}, \eqref{part1} and \eqref{part2} that
\begin{align}
\frac{H(K|Y^n)}{n}\leq C(W)+\epsilon'(n),
\end{align}
where $\epsilon'(n)=\frac{1}{n}+\epsilon c.$
From \eqref{star2MIMO2}, we deduce that 
\begin{align}
I(U;X_{J})-I(U;Y_{J})\leq C(W)+\epsilon'(n).
\end{align}
Since the joint distribution of $X_{J}$ and $Y_{J}$ is equal to $P_{XY}$, $\frac{H(K)}{n}$ is upper-bounded by $I(U;X)$ subject to $I(U;X)-I(U;Y) \leq C(W) + \epsilon'(n)$ where $U \in \mc U$ and where $U \circlearrow{X} \circlearrow{Y}$ with \begin{equation}
\mathcal{U}=\Big\{U: \quad \mathbb{E}\left[ \log^2(P_{U|X}(U|X=x))|X=x \right]<\infty,\ \forall x \in \setx \Big\}. \nonumber
\end{equation}

 As a result, it holds using \eqref{ratecorrelated} that for sufficiently large $n,$ any achievable CR rate $H$ satisfies
\begin{align}
H < \underset{ \substack{U \in \mathcal{U} \\{\substack{U \circlearrow{X} \circlearrow{Y}\\ I(U;X)-I(U;Y) \leq \color{black}C(W)+\epsilon'(n)\color{black}}}}}{\sup} I(U;X)+\delta,
\label{righthandsideconverse}
\end{align}
with $\delta>0$ being the constant in \eqref{ratecorrelated}. In particular, we can choose $\epsilon$ and $\delta$ to be arbitrarily small positive constants such that the right-hand side of \eqref{righthandsideconverse} is equal to 
$$  \underset{ \substack{U \in \mathcal{U} \\{\substack{U \circlearrow{X} \circlearrow{Y}\\ I(U;X)-I(U;Y) \leq \color{black}C(W)\color{black}}}}}{\sup} I(U;X)+\epsilon'',   $$
for $n\rightarrow \infty,$ with $\epsilon''$ being an arbitrarily small positive constant.
This completes the converse proof.
\section{Conclusion}
CR generation has striking applications in the identification scheme, a new approach in communications that is highly relevant in 6G Communication. Indeed, in contrast to Shannon message transmission, the resource CR allows a significant increase in the identification capacity of channels. For this reason, 
CR generation for future communication networks is a central research question in large 6G research projects. It is also worth mentioning that CR is highly relevant in the modular coding scheme for secure communication and a useful resource in coding over AVCs.
In this paper, we investigated the problem of CR generation
from correlated sources with countable alphabets aided by one-way communication over noisy memoryless channels. We established a single-letter expression for the CR capacity. The coding
scheme for CR generation that we proposed is  based on the same type of binning as in the Wyner-Ziv problem. The novelty lies in extending the Wyner-Ziv coding scheme to infinitely countable alphabets. As a future work, it would be interesting to investigate the problem of CR  generation from correlated sources with arbitrary joint distribution. \label{conclusion}
\section{Acknowledgments}
H. Boche was supported in part by the German Federal Ministry of Education and Research (BMBF) within the national initiative on 6G Communication Systems through the research hub 6G-life under Grant 16KISK002, within the national initiative on Post Shannon Communication
(NewCom) under Grant 16KIS1003K. It has further received funding by the German Research Foundation (DFG) within Germany’s Excellence Strategy EXC-2092 – 390781972. M. Wiese was supported by the Deutsche Forschungsgemeinschaft (DFG, German Research Foundation) within the Gottfried Wilhelm Leibniz Prize under Grant BO1734/20-1, and within Germany's Excellence Strategy EXC-2111-390814868 and EXC-2092 CASA-390781972. C.\ Deppe was supported in part by the German Federal Ministry of Education and Research (BMBF) under Grant 16KIS1005 and in part by the German Federal Ministry of Education and Research (BMBF) within the national initiative on 6G Communication Systems through the research hub 6G-life under Grant 16KISK002. W. Labidi and R. Ezzine were supported by the German Federal Ministry of Education and Research (BMBF) under Grant 16KIS1003K.

\bibliographystyle{IEEEtran}
\bibliography{definitions,references}

% Generated by IEEEtran.bst, version: 1.14 (2015/08/26)
\begin{thebibliography}{10}
\providecommand{\url}[1]{#1}
\csname url@samestyle\endcsname
\providecommand{\newblock}{\relax}
\providecommand{\bibinfo}[2]{#2}
\providecommand{\BIBentrySTDinterwordspacing}{\spaceskip=0pt\relax}
\providecommand{\BIBentryALTinterwordstretchfactor}{4}
\providecommand{\BIBentryALTinterwordspacing}{\spaceskip=\fontdimen2\font plus
\BIBentryALTinterwordstretchfactor\fontdimen3\font minus
  \fontdimen4\font\relax}
\providecommand{\BIBforeignlanguage}[2]{{%
\expandafter\ifx\csname l@#1\endcsname\relax
\typeout{** WARNING: IEEEtran.bst: No hyphenation pattern has been}%
\typeout{** loaded for the language `#1'. Using the pattern for}%
\typeout{** the default language instead.}%
\else
\language=\csname l@#1\endcsname
\fi
#2}}
\providecommand{\BIBdecl}{\relax}
\BIBdecl

\bibitem{Idchannels}
R.~{Ahlswede} and G.~{Dueck}, ``Identification via channels,'' \emph{IEEE
  Transactions on Information Theory}, vol.~35, no.~1, pp. 15--29, 1989.

\bibitem{Shannon}
C.~E. Shannon, ``A mathematical theory of communication,'' \emph{Bell System
  Technical Journal}, vol.~27, pp. 379--423, 623--656, July, October 1948.

\bibitem{trafo}
R.~Ahlswede, ``General theory of information transfer: Updated,''
  \emph{Discrete Applied Mathematics}, vol. 156, pp. 1348--1388, 05 2008.

\bibitem{part2}
R.~{Ahlswede} and I.~{Csiszar}, ``Common randomness in information theory and
  cryptography. \uppercase{II. CR} capacity,'' \emph{IEEE Transactions on
  Information Theory}, vol.~44, no.~1, pp. 225--240, 1998.

\bibitem{ahlswede2021}
\BIBentryALTinterwordspacing
R.~Ahlswede, \emph{Watermarking Identification Codes with Related Topics on
  Common Randomness}.\hskip 1em plus 0.5em minus 0.4em\relax Cham: Springer
  International Publishing, 2021, pp. 271--325. [Online]. Available:
  \url{https://doi.org/10.1007/978-3-030-65072-8_16}
\BIBentrySTDinterwordspacing

\bibitem{application}
H.~{Boche} and C.~{Deppe}, ``Secure identification for wiretap channels;
  robustness, super-additivity and continuity,'' \emph{IEEE Transactions on
  Information Forensics and Security}, vol.~13, no.~7, pp. 1641--1655, 2018.

\bibitem{MOULINwatermarking}
\BIBentryALTinterwordspacing
P.~Moulin, ``The role of information theory in watermarking and its application
  to image watermarking,'' \emph{Signal Processing}, vol.~81, no.~6, pp. 1121
  -- 1139, 2001, special section on Information theoretic aspects of digital
  watermarking. [Online]. Available:
  \url{http://www.sciencedirect.com/science/article/pii/S0165168401000378}
\BIBentrySTDinterwordspacing

\bibitem{AhlswedeWatermarking}
R.~Ahlswede and N.~Cai, \emph{Watermarking Identification Codes with Related
  Topics on Common Randomness}.\hskip 1em plus 0.5em minus 0.4em\relax Berlin,
  Heidelberg: Springer Berlin Heidelberg, 2006, pp. 107--153.

\bibitem{SteinbergWatermarking}
Y.~{Steinberg} and N.~{Merhav}, ``Identification in the presence of side
  information with application to watermarking,'' \emph{IEEE Transactions on
  Information Theory}, vol.~47, no.~4, pp. 1410--1422, 2001.

\bibitem{industry4.0}
Y.~Lu, ``Industry 4.0: A survey on technologies, applications and open research
  issues,'' \emph{Journal of Industrial Information Integration}, vol.~6, pp. 1
  -- 10, 2017.

\bibitem{6Gcomm}
G.~Fettweis and H.~Boche, ``6{G}: The personal tactile internet—and open
  questions for information theory,'' \emph{IEEE BITS the Information Theory
  Magazine}, vol.~1, no.~1, pp. 71--82, 2021.

\bibitem{researchgroup1}
F.~{Fitzek} and H.~{Boche}, ``\BIBforeignlanguage{en}{Research landscape –
  6{G} networks research in europe: 6{G}-life: Digital transformation and
  sovereignty of future communication networks},''
  \emph{\BIBforeignlanguage{en}{IEEE Network}}, vol.~35, no.~6, pp. 4--5, Nov
  2021.

\bibitem{researchgroup2}
F.~{Fitzek et. al.}, ``6{G} activities in germany,'' \emph{IEEE Future
  Networks}, to be published 2022.

\bibitem{6Gpostshannon}
J.~A. Cabrera, H.~Boche, C.~Deppe, R.~F. Schaefer, C.~Scheunert, and F.~H.~P.
  Fitzek, \emph{6{G} and the Post-Shannon Theory}.\hskip 1em plus 0.5em minus
  0.4em\relax John Wiley \& Sons, Ltd, 2021, ch.~16, pp. 271--294.

\bibitem{semanticsecurity}
M.~Wiese and H.~Boche, ``Semantic security via seeded modular coding schemes
  and ramanujan graphs,'' \emph{IEEE Transactions on Information Theory},
  vol.~67, no.~1, pp. 52--80, 2021.

\bibitem{6Gandtrustworthiness}
G.~{Fettweis} and H.~{Boche}, ``\BIBforeignlanguage{en}{On 6{G} and
  trustworthiness},'' \emph{\BIBforeignlanguage{en}{Communications of the
  ACM}}, vol.~65, no.~4, pp. 48--49, Apr 2022.

\bibitem{capacityAVC}
I.~Csiszar and P.~Narayan, ``The capacity of the arbitrarily varying channel
  revisited: positivity, constraints,'' \emph{IEEE Transactions on Information
  Theory}, vol.~34, no.~2, pp. 181--193, 1988.

\bibitem{commitmentcapacity}
A.~Winter, A.~C.~A. Nascimento, and H.~Imai, ``Commitment capacity of discrete
  memoryless channels,'' in \emph{Cryptography and Coding}, K.~G. Paterson,
  Ed.\hskip 1em plus 0.5em minus 0.4em\relax Berlin, Heidelberg: Springer
  Berlin Heidelberg, 2003, pp. 35--51.

\bibitem{unconditionallysecure}
R.~L. Rivest, ``Unconditionally secure commitment and oblivious transfer
  schemes using private channels and a trusted initializer,'' Tech. Rep., 1999.

\bibitem{part1}
R.~{Ahlswede} and I.~{Csiszar}, ``Common randomness in information theory and
  cryptography. \uppercase{I}. secret sharing,'' \emph{IEEE Transactions on
  Information Theory}, vol.~39, no.~4, pp. 1121--1132, 1993.

\bibitem{maurer}
U.~M. {Maurer}, ``Secret key agreement by public discussion from common
  information,'' \emph{IEEE Transactions on Information Theory}, vol.~39,
  no.~3, pp. 733--742, 1993.

\bibitem{globecom}
R.~{Ezzine}, W.~{Labidi}, H.~{Boche}, and C.~{Deppe}, ``Common randomness
  generation and identification over gaussian channels,'' in \emph{GLOBECOM
  2020 - 2020 IEEE Global Communications Conference (GLOBECOM)}, 2020, pp.
  1--6.

\bibitem{wafapaper}
W.~{Labidi}, C.~{Deppe}, and H.~{Boche}, ``Secure identification for
  \uppercase{G}aussian channels,'' in \emph{ICASSP 2020 - 2020 IEEE
  International Conference on Acoustics, Speech and Signal Processing
  (ICASSP)}, 2020, pp. 2872--2876.

\bibitem{SISOfasingCR}
R.~Ezzine, M.~Wiese, C.~Deppe, and H.~Boche, ``Common randomness generation
  over slow fading channels,'' in \emph{2021 IEEE International Symposium on
  Information Theory (ISIT)}, 2021, pp. 1925--1930.

\bibitem{MIMOfadingCR}
------, ``Outage common randomness capacity characterization of
  multiple-antenna slow fading channels,'' in \emph{2021 IEEE Information
  Theory Workshop (ITW)}, 2021, pp. 1--6.

\bibitem{isit_CR}
W.~Labidi, R.~Ezzine, C.~Deppe, and H.~Boche, ``Common randomness generation
  from gaussian sources,'' in \emph{2022 IEEE International Symposium on
  Information Theory (ISIT)}, 2022, submitted.

\bibitem{isit_paper}
W.~Labidi, H.~Boche, C.~Deppe, and M.~Wiese, ``Identification over the gaussian
  channel in the presence of feedback,'' in \emph{2021 IEEE International
  Symposium on Information Theory (ISIT)}, 2021, pp. 278--283.

\bibitem{ResourceEfficientCR}
B.~Ghazi and T.~Jayram, \emph{Resource-Efficient Common Randomness and
  Secret-Key Schemes}, pp. 1834--1853.

\bibitem{EntropyDisc}
S.-W. Ho and R.~Yeung, ``On the discontinuity of the shannon information
  measures,'' in \emph{Proceedings. International Symposium on Information
  Theory, 2005. ISIT 2005.}, 2005, pp. 159--163.

\bibitem{VariationalEntropy}
S.-W. Ho and R.~W. Yeung, ``The interplay between entropy and variational
  distance,'' \emph{IEEE Transactions on Information Theory}, vol.~56, no.~12,
  pp. 5906--5929, 2010.

\bibitem{unifiedTyp}
------, ``On information divergence measures and a unified typicality,''
  \emph{IEEE Transactions on Information Theory}, vol.~56, no.~12, pp.
  5893--5905, 2010.

\bibitem{MarkovLemma}
S.-W. Ho, ``Markov lemma for countable alphabets,'' in \emph{2010 IEEE
  International Symposium on Information Theory}, 2010, pp. 1448--1452.

\bibitem{wolfitz}
J.~Wolfowitz, \emph{Coding Theorems of Information Theory}.\hskip 1em plus
  0.5em minus 0.4em\relax New York, NY, USA: Springer Berlin, Heidelberg, 1961.

\bibitem{corrDMC}
H.~{Boche}, R.~F. {Schaefer}, and H.~{Vincent Poor}, ``Identification capacity
  of correlation-assisted discrete memoryless channels: Analytical properties
  and representations,'' in \emph{2019 IEEE International Symposium on
  Information Theory (ISIT)}, 2019, pp. 470--474.

\bibitem{IT_CiKo}
I.~Csisz{\'a}r and J.~K{\"o}rner,
  \emph{\BIBforeignlanguage{English}{Information theory: Coding theorems for
  discrete memoryless systems}}.\hskip 1em plus 0.5em minus 0.4em\relax
  Cambridge University Press, 1 2011.

\bibitem{dataprocessing}
S.~Ihara, \emph{Information Theory for Continuous Systems}, 1993, ch.~1, p.~39.

\end{thebibliography}

\IEEEtriggeratref{4}
%%
%% which triggers a \newpage (i.e., new column) just before the given
%% reference number. Note that you need to adapt this if you modify
%% the paper.  The "triggered" command can be changed if desired:
%%
%\IEEEtriggercmd{\enlargethispage{-20cm}}
%%
%%%%%%
%\section*{References}
%\bibliographystyle{elsarticle-num}

%%%%%%
%% References:
%% We recommend the usage of BibTeX:
%%
%\bibliographystyle{IEEEtran}
%\bibliography{definitions,bibliofile}
%%
%% where we here have assume the existence of the files
%% definitions.bib and bibliofile.bib.
%% BibTeX documentation can be obtained at:
%% http://www.ctan.org/tex-archive/biblio/bibtex/contrib/doc/
%%%%%%

%% Or you use manual references (pay attention to consistency and the
%% formatting style!):

\end{document}